\documentclass{SciPost}

\usepackage{bbm}
\usepackage{tikz}
\usetikzlibrary{arrows.meta,bending}
\usetikzlibrary{calc}
\usepackage{graphicx,subcaption}
\usepackage{mathtools}
\usepackage{bm}
\usepackage{enumitem}
\usepackage{verbatim}
\usepackage{amssymb}
\usepackage{booktabs}
\usepackage{slashed}
\usepackage{float}
\usepackage[scr = dutchcal]{mathalpha}
\usepackage{lipsum}
\usepackage{multirow}
\usepackage{hyperref}
\usepackage{yfonts}
\usepackage[normalem]{ulem}
\usepackage[dvipsnames]{xcolor}
\usepackage{amsthm}
\usepackage{orcidlink}

\newcommand{\bfmtb}{\bfmtb}
\renewcommand{\bfmtb}{\lambda}

\renewcommand{\Re}{\operatorname{Re}}
\renewcommand{\Im}{\operatorname{Im}}

\DeclareMathOperator{\Tr}{Tr}

\DeclareMathAlphabet{\mathbfsf}{OT1}{cmss}{bx}{n}

\newtheorem{proposition}{Proposition}
\newtheorem{lemma}{Lemma}

\newtheorem{theorem}{Theorem}
\newtheorem{definition}{Definition}

\newcommand{\CH}{\mathcal{H}}

\newcommand{\CW}{\mathcal{W}}

\newcommand{\SCE}{\mathscr{E}}

\newcommand{\N}{\mathbb{N}}

\newcommand{\C}{\mathbb{C}}
\newcommand{\R}{\mathbb{R}}

\newcommand{\del}{\partial}

\renewcommand{\Im}{\text{Im}}
\renewcommand{\Re}{\text{Re}}

\newcommand\be{\begin{equation}}
\newcommand\ee{\end{equation}}
\newcommand\beq{\begin{equation}}
\newcommand\eeq{\end{equation}}
\newcommand\bea{\begin{eqnarray}}
\newcommand\eea{\end{eqnarray}}

\renewcommand{\a}{\alpha}
\renewcommand{\b}{\beta}
\newcommand{\s}{\sigma}
\renewcommand{\t}{\tau}

\newcommand{\ket}[1]{|#1\rangle}

\newcommand{\I}{\mathbf{I}}
\newcommand{\J}{\mathbf{J}}

\newcommand{\sfQ}[1]{\mathsf{Q}\left(#1\right)}

\newcommand{\dkdvI}{\boldsymbol{\mathcal{I} }}

\newcommand{\tbar}{\bar{\tau}}

\definecolor{KCLRed}{RGB}{225,39,38}

\hypersetup{
    colorlinks,
    linkcolor={red!50!black},
    citecolor={blue!50!black},
    urlcolor={blue!80!black}
}

\DeclareSymbolFont{usualmathcal}{OMS}{cmsy}{m}{n}
\SetSymbolFont{usualmathcal}{bold}{OMS}{cmsy}{b}{n}
\DeclareSymbolFontAlphabet{\mathcal}{usualmathcal}
\fancypagestyle{SPstyle}{
\fancyhf{}
\lhead{\colorbox{KCLRed}{\bf \color{white} ~KCL }}

\fancyfoot[C]{\textbf{\thepage}}
}

\begin{document}

\pagestyle{SPstyle}

\begin{center}{\Large \textbf{\color{BrickRed}{
Modular Properties of Generalised Gibbs Ensembles:\\The Deformed KdV Hierarchy\\
}}}\end{center}

\begin{center}\textbf{
Faisal Karimi\,\orcidlink{0009-0003-5591-9911}
\textsuperscript{1$\star$}
}\end{center}

\begin{center}
{\bf 1} Department of Mathematics, King's College London,\\
Strand, London, WC2R 2LS, United Kingdom
\\[\baselineskip]
$\star$\href{mailto:faisal.karimi@kcl.ac.uk}{\small faisal.karimi@kcl.ac.uk}

\end{center}

\section*{\color{BrickRed}{Abstract}}
\textbf{\boldmath{%
We construct a set of conserved charges which are deformations of quantum KdV charges. They are associated with the modular transformation of a GGE with the spin-$\s$ KdV charge inserted with chemical potential $\alpha_\s$. The charges are formal power series in the chemical potential, with coefficients given by local integrals of motion. We define local recursion relations and equivalent flow equations, prove commutativity to all orders, and establish uniqueness of the currents modulo total derivatives for a fixed generating series. We show that the eigenvalues of these deformed charges can be obtained from the thermodynamic Bethe Ansatz in two models. In the $c=1/2$ free fermion theory, the deformed charges reduce to explicit linear combinations of undeformed KdV charges, and their ground-state eigenvalues agree with the TBA expansion to all orders. In the Lee–Yang model, we find numerical agreement through cubic order in the spin-$5$ deformation for the first few charges evaluated on low-lying states.
}}

\vspace{\baselineskip}








\vspace{10pt}
\noindent\rule{\textwidth}{1pt}
\tableofcontents
\noindent\rule{\textwidth}{1pt}
\vspace{10pt}

\section{Introduction}
\subsection*{Ideas and motivation}

The quantum KdV hierarchy in a two-dimensional conformal field
theory (CFT) is an infinite family of mutually commuting local
charges constructed from the stress tensor~\cite{Bazhanov:1994ft}.
Coupling these charges to chemical potentials extends the ordinary
Gibbs ensemble to a generalised Gibbs ensemble (GGE). Suppressing anti-chiral contributions, this is
\begin{equation}
    \Tr_\CH(q^{L_0-\frac{c}{24}}e^{\sum_{n \geq 1}\alpha_{2n-1} \I_{2n-1}})
\end{equation}
where $\CH$ is the CFT state space and $\I_{2n-1}$ are the KdV charges. Unlike the ordinary torus partition function, in which all the chemical potentials $\a_{2n-1}$ vanish, a GGE with a KdV charge inserted has a non-trivial modular transformation law: after exchanging the two
cycles of the torus, the result is not, in general, another GGE built only from the original KdV charges. Furthermore, in the perturbative regime, it was found that the transformation can be written as a GGE with infinitely many non-commuting local charges inserted \cite{Downing:2024nfb}. This can be thought of as a deformation of the CFT Hamiltonian. This raises the question of what becomes of the full integrable hierarchy in the transformed channel. 

The thermodynamic Bethe Ansatz (TBA) provides a useful way to
approach this question. A TBA description of the modular
transformed GGE was developed in~\cite{Downing:2024nfb}. In the
undeformed theory, moments of the TBA solution reproduce the
eigenvalues of the KdV charges. The deformed TBA also produces an
infinite sequence of moments, suggesting that it may encode more
than a deformed Hamiltonian. The main question of this paper is
whether that Hamiltonian belongs to an entire hierarchy of mutually commuting local charges, and whether the higher TBA moments compute their eigenvalues.

To formulate this question, it is important to distinguish the full modular transform from its perturbative expansion around vanishing chemical potential, which is asymptotic. In the $c=1/2$ free fermion theory, the full transform contains non-perturbative Bethe roots whose energies cannot be identified with those of states in the original
CFT~\cite{Downing:2021mfw,Downing:2023lop}. Similar behaviour occurs in the Lee-Yang model~\cite{Downing:2024nfb}. A physical
description uses a translation-invariant defect~\cite{Downing:2023lnp,Karimi:2026tvh}. Exchanging the torus cycles converts a defect insertion in the original trace into a defect present in the spatial quantisation. The transformed trace is taken over a defect Hilbert space, rather than the original CFT Hilbert space.

In this work we focus on the asymptotic regime. In the examples above, the additional non-perturbative contributions to the GGE trace are invisible to the perturbative expansion in chemical potential near zero. The remaining perturbative contributions can be expressed as a trace over the original CFT Hilbert space, with a
deformed Hamiltonian. An all-orders local expression at generic central charge for this
Hamiltonian was conjectured in~\cite{Downing:2025huv} and proved
in~\cite{Ashok:2026cyi}. This provides the starting point for our
construction. Our aim is to construct not just a Hamiltonian, but a full commuting hierarchy, in the perturbative regime.

We answer the question of whether this deformed hierarchy of charges exists in the positive. For each undeformed
KdV charge, we construct a formal series whose coefficients
are cylinder zero-modes of local fields. The individual coefficients in general do not, and do not need to, commute with one another. The full deformed charges
commute to all orders at a common value of the deformation
parameter (the chemical potential), and their first member is the Hamiltonian obtained from
the asymptotic modular transformation. We give both local recursion relations for the deformed currents and an equivalent formulation
in terms of flow equations.

The identification of the higher TBA moments with the
eigenvalues of this hierarchy is tested in two examples. In the
$c=1/2$ free fermion theory, each deformed charge can be expressed
to all orders as a formal linear combination of undeformed KdV
charges. Its ground-state eigenvalues agree coefficient by
coefficient with the corresponding TBA moments. In the interacting
Lee-Yang model, we consider a spin-$5$ deformation and compare
the spin-$1$, spin-$5$, and spin-$7$ charges with the deformed TBA
through cubic order, on the ground state and three low-lying
excited states. These comparisons provide evidence for the spectral
interpretation of the deformed TBA, complementing the all-orders
proof of commutativity.

Related deformations of integrable hierarchies arise in
other settings. KdV charges under $T\bar T$ deformations were
studied in~\cite{LeFloch:2019wlf}, where flow equations for
their eigenvalues were obtained. Generalised $T\bar T$ deformations
and higher-spin charge eigenvalues have also been analysed through
a mirror-channel formulation of GGEs and their TBA
equations~\cite{Hernandez-Chifflet:2019sua}. These results provide a point of comparison with the present construction. Here the deformation is fixed by the Hamiltonian arising from the asymptotic modular transformation of a chiral GGE, rather than by a full current-current deformation. Furthermore, the equations in this work are concerned with the operator algebra of the CFT directly. Whether the similarities between these charge flows have a conceptual relation remains an open question.

\subsection*{Setup}

We now fix our conventions and recall the deformed Hamiltonian
that the hierarchy constructed below extends.

\medskip
\noindent\emph{KdV currents and cylinder charges.}
Holomorphic fields define conserved currents and hence conserved
charges. On a cylinder of circumference $2\pi$, the quantum KdV
charges are of the form
\begin{equation}
    \I_{2n-1}
    =\int_0^{2\pi i}J_{2n}(w)\frac{dw}{2\pi i}.
\end{equation}
We use the radial quantisation map $z=e^w$, with $w$ the cylinder
coordinate and $z$ the plane coordinate. The current $J_{2n}$ is
chosen to be quasi-primary and has conformal weight $2n$, while
the subscript $2n-1$ denotes the spin of the corresponding charge. We use the BLZ normalisation, in which the currents of the KdV hierarchy are monic polynomials in the stress tensor,
\begin{equation}
    J_{2n}=(T^n)+\cdots,
\end{equation}
that is, we have coefficient one for the leading stress-tensor term. Here
$(T^n)$ denotes the right-nested normal-ordered product of $n$
stress tensors, $(T(T...(TT))$. The charges are fixed by demanding they mutually commute,
\begin{equation}
    [\I_s,\I_r]=0,
    \qquad r,s=1,3,5,7,\ldots.
\end{equation}
Writing $(AB)_p$ for the coefficient of the $p$th-order pole in the
operator product expansion (OPE) of $A$ and $B$, the corresponding
local statement is
\begin{equation}
    (J_{2n}J_{2m})_1=0\pmod{\del}.
\end{equation}
The first-order pole is a total derivative, whose cylinder
zero-mode vanishes. Products in round brackets denote conformal normal ordering, and $\del$ denotes differentiation with respect to its argument (i.e. $\del \phi(w) = \del_w \phi (w)$). Suppressing the coordinate dependence, the first few currents are\footnote{These expressions are often given without the total derivative terms required to make them quasi-primary, since such terms do not affect the integrated charges. We include them here.}
\begin{equation}
\begin{aligned}
    J_2&=T,\\
    J_4&=(TT)-\frac3{10}\del^2T,\\
    J_6&=(T(TT))-\frac{c+2}{12}(\del T\,\del T)\\
    &\quad+\del\!\left(
    \frac{2c-23}{54}(\del T\,T)
    +\frac{5c-98}{4536}\del^3T\right).
\end{aligned}
\end{equation}

\medskip
\noindent\emph{GGEs and modular transformation.}
The ordinary torus partition function is
\begin{equation}
    Z(\t,\tbar)
    =\Tr_{\CH}\!\left(
    q^{L_0-\frac{c}{24}}
    \bar q^{\bar L_0-\frac{c}{24}}
    \right),
    \qquad q=e^{2\pi i\t},
\end{equation}
where $\CH$ is the CFT Hilbert space. Modular invariance requires,
in particular,
\begin{equation}
    Z\!\left(-\frac1\t,-\frac1\tbar\right)
    =Z(\t,\tbar).
\end{equation}
Suppressing the antiholomorphic contribution, the KdV GGE is
\begin{equation}\label{eq: GGE trace}
    \Tr_{\CH}\!\left(
    q^{L_0-\frac{c}{24}}
    e^{\sum_{n=1}^{\infty}\alpha_{2n-1}\I_{2n-1}}
    \right),
\end{equation}
with a chemical potential $\alpha_{2n-1}$ coupled to each charge. If we have infinitely many charges, the trace expression converges if $\Re(\alpha_{2n-1})\leq 0$ for all $\alpha_{2n-1}$. If the expression is truncated to a finite number of charges in the GGE, the expression converges as long as the highest spin charge, say the spin $2N-1$ charge, has a chemical potential which satisfies $\Re(\alpha_{2N-1}) \leq 0$.

The defect description of~\eqref{eq: GGE trace} makes the change
of channel explicit. Define
\begin{equation}
    \mathbf D
    =\exp\!\left(
        \sum_{n\geq1}\alpha_{2n-1}\I_{2n-1}
    \right).
\end{equation}
For a purely imaginary modular parameter,
Figure~\ref{fig: Defect picture} represents the torus as a rectangle
of length $L$ and width $R$, with opposite edges identified. In
channel~(I), it is quantised as a cylinder of circumference $R$ and
length $L$, with its ends identified and $\mathbf D$ inserted along
a spatial slice. Exchanging the two cycles gives channel~(II),
where the spatial slice intersects the defect.  The defect considered here is not
topological and does not preserve Virasoro symmetry. Quantisation in
this channel produces a defect Hilbert space $\CH_D$ and a
Hamiltonian $\mathbf H_D$. Following the notation of the figure, channel~(I) has modular
parameter $\hat\t=iL/R$, while channel~(II) has $\t=iR/L$, so that
$\hat\t=-1/\t$. With $\hat q=e^{2\pi i\hat\t}$, the two descriptions
are related by
\begin{equation}
    \Tr_{\CH}\!\left(
    \hat q^{L_0-\frac{c}{24}}\mathbf D
    \right)
     = 
    \Tr_{\CH_D}\!\left(q^{\mathbf H_D}\right).
\end{equation}

\begin{figure}[t]
\centering
\begingroup
\definecolor{cylOutline}{HTML}{303B48}
\definecolor{cylSurface}{HTML}{F6F8FA}
\definecolor{cylRed}{HTML}{C43D3D}
\definecolor{cylGreen}{HTML}{278067}
\definecolor{cylBlue}{HTML}{3267AA}

\begin{tikzpicture}[
  x=1.12cm, y=1cm,
  line cap=round,
  line join=round,
  outline/.style={
    draw=cylOutline,
    line width=.65pt
  },
  seam/.style={
    draw=cylOutline!45,
    line width=.45pt,
    dash pattern=on 1.6pt off 1.8pt
  },
  insertion/.style={
    draw=cylRed,
    line width=1pt
  },
  hidden insertion/.style={
    insertion,
    dash pattern=on 1.7pt off 1.8pt
  },
  evolution/.style={
    line width=.95pt,
    -{Stealth[length=2.2mm,width=1.6mm]}
  },
  dimension/.style={
    draw=cylOutline!85,
    line width=.45pt,
    {Stealth[length=1.6mm,width=1.1mm]}-{Stealth[length=1.6mm,width=1.1mm]}
  },
  witness/.style={
    draw=cylOutline!40,
    line width=.35pt
  },
  every node/.style={inner sep=2pt}
]

\node[font=\large] at (0,7.62) {I};
\node[font=\large] at (6.8,7.62) {II};

\path[fill=cylSurface]
  (-.8,4.3) -- (-.8,7)
  arc[start angle=180,end angle=360,x radius=.8,y radius=.22]
  -- (.8,4.3)
  arc[start angle=0,end angle=-180,x radius=.8,y radius=.22]
  -- cycle;

\draw[seam] (.8,4.3)
  arc[start angle=0,end angle=180,x radius=.8,y radius=.22];

\draw[outline] (-.8,7) -- (-.8,4.3)
  arc[start angle=180,end angle=360,x radius=.8,y radius=.22]
  -- (.8,7);

\draw[outline,fill=white]
  (0,7) ellipse[x radius=.8,y radius=.22];

\draw[evolution,cylGreen] (0,4.09) -- (0,6.78);

\draw[hidden insertion] (.8,5.13)
  arc[start angle=0,end angle=180,x radius=.8,y radius=.22];

\draw[insertion] (-.8,5.13)
  arc[start angle=180,end angle=360,x radius=.8,y radius=.22];

\draw[
  draw=cylOutline,
  line width=.8pt,
  -{Stealth[length=2.7mm,width=2mm]}
] (1.65,5.65) -- (4.45,5.65);

\path[fill=cylSurface]
  (5.45,4.85) -- (8.15,4.85)
  arc[start angle=270,end angle=90,x radius=.22,y radius=.8]
  -- (5.45,6.45)
  arc[start angle=90,end angle=270,x radius=.22,y radius=.8]
  -- cycle;

\draw[seam] (5.45,4.85)
  arc[start angle=270,end angle=450,x radius=.22,y radius=.8];

\draw[outline] (8.15,4.85) -- (5.45,4.85)
  arc[start angle=270,end angle=90,x radius=.22,y radius=.8]
  -- (8.15,6.45);

\draw[outline,fill=white]
  (8.15,5.65) ellipse[x radius=.22,y radius=.8];

\draw[hidden insertion] (6,4.85)
  arc[start angle=270,end angle=450,x radius=.22,y radius=.8];

\draw[insertion] (6,6.45)
  arc[start angle=90,end angle=270,x radius=.22,y radius=.8];

\draw[
  draw=cylBlue,
  line width=.95pt,
  -{Stealth[length=2.2mm,width=1.6mm,bend]}
] (6.8,4.85)
  arc[start angle=270,end angle=110,x radius=.26,y radius=.8];

\draw[outline] (-.8,.5) rectangle (.8,3.2);
\draw[insertion] (-.8,1.33) -- (.8,1.33);

\draw[outline] (5.45,1.05) rectangle (8.15,2.65);
\draw[insertion] (6,1.05) -- (6,2.65);

\draw[witness] (-.89,.5) -- (-1.25,.5);
\draw[witness] (-.89,3.2) -- (-1.25,3.2);
\draw[dimension] (-1.18,.5)
  -- node[left=3pt] {$L$} (-1.18,3.2);

\draw[witness] (-.8,.41) -- (-.8,.04);
\draw[witness] (.8,.41) -- (.8,.04);
\draw[dimension] (-.8,.12)
  -- node[below=3pt] {$R$} (.8,.12);

\draw[witness] (5.36,1.05) -- (5,1.05);
\draw[witness] (5.36,2.65) -- (5,2.65);
\draw[dimension] (5.07,1.05)
  -- node[left=3pt] {$R$} (5.07,2.65);

\draw[witness] (5.45,.96) -- (5.45,.59);
\draw[witness] (8.15,.96) -- (8.15,.59);
\draw[dimension] (5.45,.67)
  -- node[below=3pt] {$L$} (8.15,.67);

\node[text=cylGreen] at (0,-.83)
  {$\displaystyle \hat{\tau}=\frac{iL}{R}=-\frac{1}{\tau}$};

\node at (0,-1.8)
  {$\displaystyle \Tr_{\CH}(\mathbf D\,\hat q^{L_0-\frac{c}{24}}) $};

\node[text=cylBlue] at (6.8,-.83)
  {$\displaystyle \tau=\frac{iR}{L}$};
\node at (6.8,-1.8)
  {$\displaystyle \Tr_{\CH_D}( q^{\mathbf{H}_D}) $};
\end{tikzpicture}
\caption{Interpretation of the modular transformed GGE traces: on torus (I), the GGE is given by a defect inserted as an operator $\mathbf D$ in the trace; on torus (II) the defect is rotated and the transformed GGE is given by a trace over the Hilbert space $\CH_D$ with a defect Hamiltonian $\mathbf{H}_D$ inserted in the trace. The arrows indicate the direction of time.}
\label{fig: Defect picture}
\endgroup
\end{figure}

\medskip
\noindent\emph{The deformed Hamiltonian.}
For a single insertion $\I_\s$, the perturbative contribution to
the modular transformation of the GGE can be represented by a trace over the original CFT Hilbert space:
\begin{equation}
    \Tr_{\CH}\!\left(
    q^{L_0-\frac{c}{24}}e^{\alpha_\s\I_\s}
    \right)
    \xrightarrow{S}
    \Tr_{\CH_D}\!\left(q^{\mathbf H_D}\right)
    \underset{\alpha_\s\to0}{\sim}
    \Tr_{\CH}\!\left(q^{\dkdvI_1^{(\s)}}\right).
\end{equation}
We call $\dkdvI_1^{(\s)}$ the \emph{$\s$-deformed Hamiltonian}.
It is obtained by expanding the GGE in $\alpha_\s$, performing
the modular transformation term by term, retaining at each order
the appropriate leading contribution in the modular parameter
corresponding to a one-point thermal function of a local charge, and re-exponentiating the resulting series. The resummation of the full modular transformed series into an exponential was discussed and tested to first non-trivial order in the chemical potential in Sections~3 and~4 of~\cite{Downing:2024nfb}. The all-orders construction of the deformed Hamiltonian found in~\cite{Downing:2025huv,Ashok:2026cyi} says that it is to be written as an infinite sum of local charges
\begin{equation}\label{eq: deformed Hamiltonian}
    \dkdvI_1^{(\s)}
    =\I_1+
    \sum_{n=1}^{\infty}
    \frac{\gamma^n}{(n-1)!\,2^{n-1}}\J_{1,1 + n(\s-1)},
    \qquad
    \gamma=\frac{\alpha_\s\t^\s}{2\pi i}.
\end{equation}
Where each $\J_{1,1+n(\s-1)}$ has spin $1+n(\s-1)$ and is the integral of a quasi-primary field on the cylinder $Y_{1,2+n(\s-1)}(w)$. In general it was found that
\begin{equation}
    [\J_{1,s },\J_{1,r}]\neq0.
\end{equation}
We treat~\eqref{eq: deformed Hamiltonian} as a formal power series. Our interest is the asymptotic regime $\alpha_\s\to0$, and we must not assume convergence of the expansion. Indeed, since the GGE expression converges only on one side of the complex $\alpha_\s$ plane, then any perturbative expansion around $\alpha_\s\to 0$ must be asymptotic since such an expansion will have zero radius of convergence. Furthermore, the corresponding
series is seen to be divergent explicitly in the free theories where it can be determined exactly~\cite{Downing:2021mfw, Karimi:2026tvh}.

\subsection*{Outline}
The paper is organised as follows. Section~\ref{sec: The Deformed Hierarchy} defines the $\s$-deformed KdV hierarchy and gives the local recursion relations for its currents. Section~\ref{sec: The Deformed Hierarchy Commutes} proves that the resulting charges commute to all orders in the deformation parameter, first directly from the recursion and then using the equivalent flow equations. Section~\ref{sec: The Deformed Hierarchy in FF} evaluates the construction in the $c=1/2$ free fermion model, where the operator series can be obtained explicitly and its ground-state eigenvalues agree with the TBA expansion to all orders. Section~\ref{sec: The Deformed Hierarchy in LY} studies the Lee-Yang model, where we find numerical agreement with the proposed TBA eigenvalues through cubic order for the spin-$1$, spin-$5$, and spin-$7$ charges on the ground state and the first three levels of the $h=-1/5$ module. The appendices collect the local fields, their eigenvalues, and the TBA equations used in these comparisons.

\section{The $\sigma$-Deformed KdV Hierarchy}\label{sec: The Deformed Hierarchy}

In this section, we define the $\s$-deformed KdV hierarchy.
Each deformed charge is a formal power series in $\gamma\propto \alpha_\s$,
with coefficients given by integrals of local fields. These
coefficient charges need not commute with one another.
Nevertheless, for any $s$ and $r$, the coefficient of every
power of $\gamma$ in the commutator \([\dkdvI_s^{(\s)},\dkdvI_r^{(\s)}]\) vanishes. Therefore, the full set forms a commuting hierarchy as we prove in
Section~\ref{sec: The Deformed Hierarchy Commutes}.

We begin with a generalised Gibbs ensemble (GGE) containing a single spin-$\s$ KdV charge $\I_\s$, with chemical potential $\alpha_\s$. In the asymptotic regime $\alpha_\s\to0$, its modular $S$-transformation, $\t\mapsto-1/\t$, takes the form
\begin{equation}
    \Tr_{\CH}\!\left(
    q^{L_0-\frac{c}{24}}e^{\alpha_\s\I_\s}
    \right)
    \xrightarrow{S}  \Tr_{\CH_D}\!\left(q^{\mathbf H_D}\right)
     \underset{\alpha_\sigma \to 0}{\sim}
    \Tr_{\CH}\!\left(q^{\dkdvI_1^{(\s)}}\right).
\end{equation}
We call $\dkdvI_1^{(\s)}$ the $\s$-deformed Hamiltonian. The quasi-primary KdV current whose cylinder zero-mode is $\I_\s$ will be called the \emph{generating field}:
\begin{equation}
    W=J_{\s+1}.
\end{equation}
An expression for the deformed Hamiltonian in terms of $W$ was conjectured in \cite{Downing:2025huv}\footnote{Actually, in that paper they were looking only at the case of the zero mode of the spin-$3$ primary field of the $\CW_3$ algebra inserted into the GGE, corresponding to the first non-trivial charge of the quantum Boussinesq hierarchy \cite{Bazhanov:2001xm}. The formula for the modular transformed Hamiltonian doesn't depend on this however and is easily generalised as shown in the follow-up paper mentioned in the main body of the text.} and proved in a follow-up paper~\cite{Ashok:2026cyi}. Although the construction in those references does not require $W$ to be a KdV current, we impose this restriction here because our aim is to deform the entire commuting KdV hierarchy in such a way as to preserve commutativity.

The convention for radial quantisation used in this paper is $z=e^w$, where $w$ is the cylinder coordinate and $z$ is the plane coordinate. For a field $A$, define its cylinder zero-mode by
\begin{equation}
    \sfQ{A}:=\int_0^{2\pi i}A(w)\frac{dw}{2\pi i}.
\end{equation}
This is denoted by $(\hat A)_0$ in~\cite{Downing:2025huv,Karimi:2026tvh}. Since total derivatives integrate to zero on the cylinder, fields that differ by a total derivative define the same charge.

Let $(AB)_p$ denote the coefficient of the $p$th-order pole in the operator product expansion (OPE) of two chiral fields,
\begin{equation}
    A(w_1)B(w_2)\sim
    \sum_{p>0}\frac{(AB)_p(w_2)}{(w_1-w_2)^p}.
\end{equation}
Here we're only interested in the singular part of the OPE. The second-order pole is a commutative product modulo total derivatives, although not generally associative. We define the bracketed fields $[W^n]$ recursively by
\begin{equation}\label{eq: W recursion}
    [W]=W,
    \qquad
    [W^n]=\sum_{k=1}^{n-1}\binom{n-2}{k-1}
    \bigl([W^k][W^{n-k}]\bigr)_2,
    \qquad n\geq2.
\end{equation}
In terms of these fields, the deformed Hamiltonian is $\dkdvI_1^{(\s)}=\sfQ{\mathsf J_1}$, where
\begin{equation}\label{eq: Definition Hamiltonian deformed}
    \mathsf J_1:=J_2+
    \sum_{n=1}^{\infty}\frac{\gamma^n}{(n-1)!\,2^{n-1}}[W^n]
    ,
    \qquad
    \gamma=\frac{\alpha_\s\t^\s}{2\pi i},
\end{equation}
and $J_2 = T$, the stress tensor. We now extend this construction to the higher KdV charges. For each undeformed charge $\I_s$, we take from it the associated quasi-primary current $J_{s+1}$ to define a \emph{seed field}. The deformed currents will be built by acting with the generating field recursively on the seed fields. Define the first contribution to the deformation as
\begin{equation}\label{eq: Ms1}
    M_{s,1}
    =\frac{s}{s+\s}(J_{s+1}W)_2
    =\del^{-1}(J_{s+1}W)_1
    \pmod{\del}.
\end{equation}
If one chooses $s=1$, then
\begin{equation}
    M_{1,1} = W,
\end{equation}
like above. For $n\geq2$, define the higher-order contributions recursively by
\begin{equation}\label{eq: Msn recursion}
    M_{s,n}
    =\frac{1}{n}\del^{-1}\Bigg\{
    \bigl(J_{s+1}[W^n]\bigr)_1
    +2(n-1)\sum_{k=1}^{n-1}\binom{n-2}{k-1}
    \bigl(M_{s,k}[W^{n-k}]\bigr)_1
    \Bigg\}
    \pmod{\del}.
\end{equation}
Here $\del^{-1}$ denotes a local primitive of the complete expression to which it is applied. We use fixed homogeneous representatives of the bracketed fields when evaluating the recursion. The existence of these local primitives is established in Section~\ref{sec: The Deformed Hierarchy Commutes}.

\begin{definition}
For a fixed generating field $W=J_{\s+1}$ and deformation parameter $\gamma=\frac{\alpha_\s\t^\s}{2\pi i}$, the \emph{$\s$-deformed KdV charges} are the formal series
\begin{equation}\label{eq: dkdv definition}
    \dkdvI_s^{(\s)}:=\I_s+
    \sum_{n=1}^{\infty}\frac{\gamma^n}{(n-1)!\,2^{n-1}}
    \sfQ{M_{s,n}}.
\end{equation}
Equivalently, they are the cylinder zero-modes of the deformed currents
\begin{equation}
    \mathsf J_s:=J_{s+1}+
    \sum_{n=1}^{\infty}\frac{\gamma^n}{(n-1)!\,2^{n-1}}M_{s,n}
    \pmod{\del},
    \qquad
    \dkdvI_s^{(\s)}=\sfQ{\mathsf J_s}.
\end{equation}
We refer to $\dkdvI_s^{(\s)}$ as the $\s$-deformed KdV charge of spin-$s$.
\end{definition}
It is convenient for use later in this work to collect the bracketed fields into the generating series
\begin{equation}
    \mathsf K:=\sum_{m=0}^{\infty}
    \frac{\gamma^m}{2^m m!}[W^{m+1}].
\end{equation}
Proposition~\ref{prop: relation M fields} below shows that $M_{1,n}=[W^n]$ modulo total derivatives. Thus the $s=1$ member of the hierarchy reproduces the deformed Hamiltonian in~\eqref{eq: Definition Hamiltonian deformed}, with
\begin{equation}
    \mathsf J_1=T+\gamma\mathsf K
    \pmod{\del},
    \qquad T=J_2.
\end{equation}
 
Two qualifications are necessary. First, we treat the series
in~\eqref{eq: dkdv definition} as formal and do not assume
convergence. Indeed, the Hamiltonian expansion is divergent
in the free fermion and symplectic fermion examples studied
in~\cite{Downing:2021mfw,Karimi:2026tvh}. In these free theories,
the TBA equations are trivial, and determining the excited-state
energies reduces to finding the roots of a polynomial equation.
The branch connected to the undeformed spectrum reproduces
the perturbative expansion of the deformed Hamiltonian.
The energies associated with the remaining branches diverge
as $\alpha_\s\to0$, while their exponentiated contributions to
the GGE have vanishing asymptotic expansions. Therefore these contributions do not appear at any order in the power series in
$\alpha_\s$. In later sections, we compare the eigenvalues of
the higher deformed charges with the corresponding expansions
obtained from the TBA perturbatively. A non-perturbative completion of these
charges lies beyond the scope of the present work.

Second, the label ``spin $s$'' refers to the undeformed charge from which $\dkdvI_s^{(\s)}$ is constructed. The deformed current generally contains fields of different conformal weights, so this label does not assign a definite spin to the full series. Whether a non-perturbative completion restores such an interpretation remains an open question.

We conclude with a convenient relation between the corrections generated by the seeds $J_2=T$ and $J_{\s+1}=W$.

\begin{proposition}\label{prop: relation M fields}
Let $M_{s,n}$ be defined by~\eqref{eq: Ms1} and~\eqref{eq: Msn recursion}, with generating field $W=J_{\s+1}$. Then, for every $n\geq1$,
\begin{equation}
    M_{1,n+1}=2nM_{\s,n}=[W^{n+1}]
    \pmod{\del}.
\end{equation}
\end{proposition}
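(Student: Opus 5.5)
The plan is induction on $n$, proving both identities at once by showing that $M_{1,n+1}$, $2nM_{\s,n}$ and $[W^{n+1}]$ obey the same recursion modulo total derivatives. Two standard OPE facts do most of the work. First, for $J_2=T$ and any chiral field $A$, the first-order pole is $(TA)_1=\del A$. Second, skew-symmetry of the singular OPE,
\[
(AB)_1=-(BA)_1+\del(BA)_2-\tfrac12\del^2(BA)_3+\cdots,
\]
implies that for any symmetric pairing of fields,
\[
(AB)_1+(BA)_1=\del(BA)_2 \pmod{\del^2}.
\]
Since $\del^{-1}$ is only taken modulo $\del$, statements modulo $\del^2$ at the level of the integrand are exactly what is needed.

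\emph{Base cases.} Since $W$ is quasi-primary of weight $\s+1$, we have $(TW)_2=(\s+1)W$, so~\eqref{eq: Ms1} gives $M_{1,1}=W$. Likewise
\[
M_{\s,1}=\tfrac{\s}{2\s}(WW)_2=\tfrac12[W^2].
\]
For $M_{1,2}$, the recursion~\eqref{eq: Msn recursion} gives $\tfrac12\del^{-1}\{\del[W^2]+2(WW)_1\}$. By skew-symmetry $2(WW)_1\equiv\del(WW)_2=\del[W^2]$ modulo $\del^2$, hence $M_{1,2}=[W^2]=2M_{\s,1}$.

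\emph{Inductive step for $2nM_{\s,n}$.} Assume $M_{\s,k}=[W^{k+1}]/(2k)$ for all $k<n$ and insert this into~\eqref{eq: Msn recursion} with seed $J_{\s+1}=W=[W^1]$. The binomial identity
\[
\frac{2(n-1)}{2k}\binom{n-2}{k-1}=\binom{n-1}{k}
\]
lets me absorb the seed term as the $k=0$ member. Reindexing by $j=k+1$ then gives
\[
nM_{\s,n}=\del^{-1}\sum_{j=1}^{n}\binom{n-1}{j-1}\bigl([W^j][W^{n+1-j}]\bigr)_1 .
\]
The weights are symmetric under $j\mapsto n+1-j$. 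Pairing terms and applying skew-symmetry turns the sum into $\tfrac12\del\sum_j\binom{n-1}{j-1}([W^j][W^{n+1-j}])_2$ modulo $\del^2$. By~\eqref{eq: W recursion} this equals $\tfrac12\del[W^{n+1}]$. Hence $2nM_{\s,n}=[W^{n+1}]$ modulo $\del$.

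\emph{Inductive step for $M_{1,n+1}$.} With seed $T$, the first term is $(T[W^{n+1}])_1=\del[W^{n+1}]$. The inductive hypothesis $M_{1,k}=[W^k]$ makes the sum $2n\sum_{k=1}^{n}\binom{n-1}{k-1}([W^k][W^{n+1-k}])_1$, whose weights are again symmetric. By the same symmetrisation it equals $n\,\del[W^{n+1}]$ modulo $\del^2$. The braces therefore total $(n+1)\del[W^{n+1}]$, and dividing by $n+1$ and taking the primitive gives $M_{1,n+1}=[W^{n+1}]$.

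\emph{Main obstacle.} The delicate point is that the inductive hypotheses hold only modulo $\del$, yet they are substituted into first-order poles. One has $(\del X\,B)_1=0$, but $(B\,\del X)_1=\del(BX)_1$, whose primitive need not be a total derivative. So replacing $M_{s,k}$ by another representative of its class can shift $M_{s,n}$ by a non-derivative term. I would handle this as the paper indicates, by fixing homogeneous representatives: take the stated identities as exact equalities of the chosen representatives, and track the $\del^2$ remainders from skew-symmetry explicitly so that they only ever enter inside $\del^{-1}(\cdots)$ modulo $\del$. If that is too rigid, the fallback is to prove the weaker statement that all $\del$-ambiguities contribute terms of the form $\del(\cdot)$ to the integrand.
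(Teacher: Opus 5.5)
Your proposal is correct and follows essentially the same route as the paper: two inductions, one for $M_{1,n}=[W^n]$ and one for $2nM_{\s,n}=[W^{n+1}]$. In each you absorb the seed term as the $k=0$ member of the binomial sum, symmetrise under $j\mapsto n+1-j$, and close with skew-symmetry and the second-order-pole recursion for $[W^{n+1}]$. The ``main obstacle'' you flag does not arise, because in the recursion each $M_{s,k}$ appears only in the first slot of a first-order pole, where $(\del X\,B)_1=0$ exactly, while the second slot always holds a fixed representative $[W^{n-k}]$; substituting the inductive hypotheses modulo $\del$ is therefore exact, and no extra bookkeeping is needed.
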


\begin{proof}
We prove the two identities by induction, using the skew-symmetry relation
\begin{equation}
    (AB)_1+(BA)_1=\del(AB)_2
    \pmod{\del^2}.
\end{equation}
First, we show that $M_{1,n}=[W^n]$ modulo total derivatives for all $n\geq1$. Since $J_2=T$, the initial condition gives
\begin{equation}
    M_{1,1}=\del^{-1}(TW)_1
    =\del^{-1}\del W
    =W=[W]
    \pmod{\del}.
\end{equation}
Suppose that $M_{1,k}=[W^k]$ modulo total derivatives for $1\leq k\leq n$. Substituting into~\eqref{eq: Msn recursion} yields
\begin{equation}
    M_{1,n+1}=\frac{1}{n+1}\del^{-1}\Bigg\{
    \del[W^{n+1}]
    +2n\sum_{k=1}^{n}\binom{n-1}{k-1}
    \bigl([W^k][W^{n+1-k}]\bigr)_1
    \Bigg\}
    \pmod{\del}.
\end{equation}
The binomial coefficients are invariant under $k\mapsto n+1-k$. Symmetrising the sum and applying skew-symmetry therefore gives
\begin{equation}
\begin{aligned}
    &\sum_{k=1}^{n}\binom{n-1}{k-1}
    \bigl([W^k][W^{n+1-k}]\bigr)_1\\
    &\qquad=\frac12\del\sum_{k=1}^{n}\binom{n-1}{k-1}
    \bigl([W^k][W^{n+1-k}]\bigr)_2\\
    &\qquad=\frac12\del[W^{n+1}]
    \pmod{\del^2},
\end{aligned}
\end{equation}
where the last equality follows from~\eqref{eq: W recursion}. Hence
\begin{equation}
    M_{1,n+1}
    =\frac{1}{n+1}\del^{-1}
    \bigl\{\del[W^{n+1}]+n\del[W^{n+1}]\bigr\}
    =[W^{n+1}]
    \pmod{\del}.
\end{equation}
Next, we show that $M_{\s,n}=[W^{n+1}]/(2n)$ modulo total derivatives\footnote{It looks as if when $\sigma =1$ this contradicts the previous statement. There is no contradiction. For $\s=1$, the generating field is $W=T$. Since $(TT)_2=2T$,
the recursion~\eqref{eq: W recursion} gives
\[
    [T^n]=2^{n-1}(n-1)!\,T \pmod{\del}.
\]
Thus $[T^{n+1}]=2n[T^n]$ modulo total derivatives, and the
two identities in the proposition remain consistent when
the seed labels $1$ and $\s$ coincide.}. For $n=1$, equation~\eqref{eq: Ms1} gives
\begin{equation}
    M_{\s,1}=\frac{\s}{2\s}(WW)_2
    =\frac12[W^2]
    \pmod{\del}.
\end{equation}
Suppose the identity holds for $1\leq k\leq n$. One can absorb the factors of $n$ and $k^{-1}$ to find
\begin{equation}
    M_{\s,n+1}
    =\frac{1}{n+1}\del^{-1}
    \sum_{j=0}^{n}\binom{n}{j}
    \bigl([W^{j+1}][W^{n+1-j}]\bigr)_1
    \pmod{\del}.
\end{equation}
The $j=0$ term comes from $\bigl(W[W^{n+1}]\bigr)_1$. Symmetrising under $j\mapsto n-j$ and applying skew-symmetry once more, we obtain
\begin{equation}
\begin{aligned}
    M_{\s,n+1}
    &=\frac{1}{2(n+1)}\del^{-1}
    \sum_{j=0}^{n}\binom{n}{j}
    \Bigl\{
    \bigl([W^{j+1}][W^{n+1-j}]\bigr)_1
    +\bigl([W^{n+1-j}][W^{j+1}]\bigr)_1
    \Bigr\}\\
    &=\frac{1}{2(n+1)}
    \sum_{j=0}^{n}\binom{n}{j}
    \bigl([W^{j+1}][W^{n+1-j}]\bigr)_2\\
    &=\frac{1}{2(n+1)}[W^{n+2}]
    \pmod{\del},
\end{aligned}
\end{equation}
where the final step again uses~\eqref{eq: W recursion}. Combining the two identities proves the proposition.
\end{proof}

\section{The $\sigma$-Deformed KdV Hierarchy Mutually Commutes}\label{sec: The Deformed Hierarchy Commutes}

We now prove that the $\s$-deformed KdV charges commute to all orders in $\gamma$. We give two complementary arguments. The first works directly with the recursion relations; the second reformulates them as flow equations and shows that these flows preserve commutativity. Both arguments also establish the existence of the local primitives required by the construction. The two arguments were checked in an AI audit.

Throughout this section, $V$ denotes the Virasoro vacuum module, and $V_H$ its subspace of conformal weight $H$. We write $\del V$ for the space of total derivatives and use the property $\ker\del=\C\mathbf{1}$, so that $\del$ is injective on fields of positive conformal weight. All power series are understood formally.

\subsection{Commutativity from the recursion}

The key step is an identity relating the first-order OPE poles of the component fields $M_{s,n}$. This identity ensures that the coefficient of each power of $\gamma$ in the commutator is the zero-mode of a total derivative.

\begin{proposition}\label{prop: relation first poles Ms}
Let the generating field $W$ be a quasi-primary KdV current $J_{\s+1}$. Recall that the undeformed seed fields satisfy, by definition of being KdV fields, 
\begin{equation}
    (J_{s+1}J_{r+1})_1=0\pmod{\del}
\end{equation}
for every pair $s,r$. Fix representatives of the fields $[W^n]$ satisfying~\eqref{eq: W recursion}. Then the recursion~\eqref{eq: Msn recursion}, with initial condition
\begin{equation}
    M_{s,1}=\del^{-1}(J_{s+1}W)_1\pmod{\del},
\end{equation}
admits local primitives at every order. Moreover, for every $N\geq1$,
\begin{equation}\label{eq: identiy first order pole}
\begin{aligned}
    0={}&(J_{s+1}M_{r,N})_1-(J_{r+1}M_{s,N})_1\\
    &+2(N-1)\sum_{n=1}^{N-1}\binom{N-2}{n-1}
    (M_{s,n}M_{r,N-n})_1
    \pmod{\del},
\end{aligned}
\end{equation}
where the sum is empty when $N=1$.
\end{proposition}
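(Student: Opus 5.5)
The plan is to prove the existence of local primitives and the identity~\eqref{eq: identiy first order pole} together, by strong induction on $N$. The main tools are three standard vertex-algebra facts about the first-order pole. First, for a field $A$ the map $D_A:=(A\,\cdot\,)_1$ is the action of the zero-mode $A_{(0)}$. It is therefore a derivation of every OPE product, $D_A(BC)_p=(D_AB\,C)_p+(B\,D_AC)_p$, and it commutes with $\del$. Second, the Jacobi identity for the first pole reads $D_AD_B-D_BD_A=D_{(AB)_1}$, and $D_{\del X}=0$, so first poles modulo $\del$ form a Lie algebra. Third, we have the derivative rule $(\del X\,Y)_2=-(XY)_1$, together with the skew-symmetry already used in Proposition~\ref{prop: relation M fields}.

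\emph{Step 1 (the derivative of the bracket fields).} I would first show by induction on $n$ that $D_{J_{s+1}}[W^n]$ can be written explicitly in terms of the $M_{s,k}$. Since $D_{J_{s+1}}W=(J_{s+1}W)_1=\del M_{s,1}$ by hypothesis, applying the derivation property to~\eqref{eq: W recursion} gives
\[
D_{J_{s+1}}[W^n]=\sum_{k}\binom{n-2}{k-1}\Bigl[(D_{J_{s+1}}[W^k]\,[W^{n-k}])_2+([W^k]\,D_{J_{s+1}}[W^{n-k}])_2\Bigr].
\]
Suppose the inductive hypothesis expresses $D_{J_{s+1}}[W^k]$ as $\del(\text{primitive at order }k)$ minus the sum $2(k-1)\sum\binom{k-2}{j-1}(M_{s,j}[W^{k-j}])_1$, in the shape of~\eqref{eq: Msn recursion}. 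Then the rule $(\del X\,Y)_2=-(XY)_1$ turns second poles of derivatives into first poles. After resymmetrising the binomial sums exactly as in the proof of Proposition~\ref{prop: relation M fields}, the braces in~\eqref{eq: Msn recursion} should collapse to $n\,\del(\cdots)$ modulo $\del^2$. The existence of the local primitive $M_{s,n}$ then follows, with $\del$ injective on positive weight.

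\emph{Step 2 (the identity).} With the primitives in hand, I would prove~\eqref{eq: identiy first order pole} at order $N$. I would insert the recursion for $M_{r,N}$ into $(J_{s+1}M_{r,N})_1=\del^{-1}D_{J_{s+1}}\bigl(\del M_{r,N}\bigr)$, and do the same for $M_{s,N}$. The term $D_{J_{s+1}}D_{J_{r+1}}[W^N]-D_{J_{r+1}}D_{J_{s+1}}[W^N]$ equals $D_{(J_{s+1}J_{r+1})_1}[W^N]$ by Jacobi, and this vanishes because $(J_{s+1}J_{r+1})_1\in\del V$. The cross terms $D_{J_{s+1}}(M_{r,k}[W^{N-k}])_1$ are expanded by the derivation rule. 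One piece involves $D_{J_{s+1}}M_{r,k}$, which is controlled by the inductive hypothesis at order $k<N$. The other piece involves $D_{J_{s+1}}[W^{N-k}]$, which is controlled by Step 1. What remains should be the sum $2(N-1)\sum\binom{N-2}{n-1}(M_{s,n}M_{r,N-n})_1$ with the opposite sign, modulo $\del$. The case $N=1$ reduces to Jacobi applied to $W$: $(J_{s+1}\,\del^{-1}(J_{r+1}W)_1)_1-(s\leftrightarrow r)=\del^{-1}D_{(J_{s+1}J_{r+1})_1}W=0$.

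\emph{Main obstacle.} I expect the hardest part to be the combinatorial bookkeeping in Steps 1 and 2. The binomial weights $\binom{N-2}{n-1}$, the factors $2(n-1)$ and $1/n$, and the non-associativity of $(\cdot\cdot)_2$ must all conspire so that every leftover term is either a total derivative or cancels pairwise under $s\leftrightarrow r$. I would organise this with generating series in $\gamma$. One defines $\mathsf J_s$ as in the Definition and works with $\mathsf K$, and then the recursion should become a flow $\partial_\gamma$-equation of the form $\del\,\partial_\gamma\mathsf J_s\propto(\mathsf J_s\,\mathsf K)_1$ modulo $\del$. The identity is then the statement that $(\mathsf J_s\mathsf J_r)_1$ has vanishing $\gamma$-derivative modulo $\del$. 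This follows from Jacobi plus skew-symmetry, and the binomial factors are absorbed into the exponential-type normalisation $\gamma^n/((n-1)!\,2^{n-1})$. If the direct coefficient matching becomes unwieldy, I would fall back on this formal-series route.
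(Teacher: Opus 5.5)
Your outline is correct. I checked that both steps close at every order, but the argument is organised differently from the paper's.

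\textbf{The paper's route.} The paper never proves existence separately. It rescales to $A_{s,n}=M_{s,n}/(2^{n-1}(n-1)!)$ and $K_n=[W^{n+1}]/(2^n n!)$, and notes via skew-symmetry that $K_n$ obeys the same first-pole recursion, so $K_n$ can serve as $A_{\s,n}$. Existence of $A_{s,N}$ is then read off from the identity for the pair $(s,\s)$ at order $N-1$. The identity itself is proved by writing $NC_{s,r;N}=\sum_{i+j=N}(i+j)(A_{s,i}A_{r,j})_1$. Skew-symmetry is used so that each weight multiplies the field whose derivative the recursion supplies; the whole expression is differentiated and the recursion substituted. A single Jacobi step then leaves $\sum_v(C_{s,r;N-1-v}K_v)_1$. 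This uses only first-pole Jacobi and skew-symmetry, and it treats the seed like every other coefficient. The $(i+j)$ weighting is the coefficient form of $\partial_\gamma$, which is why it lifts directly to the flow argument you keep as a fallback; that fallback is exactly the paper's second proof.

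\textbf{Step 1.} You instead obtain existence independently of commutativity, from the stronger identity $[A_{(0)},B_{(1)}]=(A_{(0)}B)_{(1)}$. One step is missing from your description. Substituting the recursion for $D_{J_{s+1}}[W^k]$ creates nested terms such as $((M_{s,j}[W^a])_1[W^b])_2$. Neither $(\del X\,Y)_2=-(XY)_1$ nor resymmetrisation removes them. You must run the derivation property backwards with $A=M_{s,j}$, which in rescaled form reads $\sum_{c+b=m}\bigl[((A_{s,i}K_c)_1K_b)_2+(K_b(A_{s,i}K_c)_1)_2\bigr]=2(m+1)(A_{s,i}K_{m+1})_1$. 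With this the binomial weights do cancel, and the braces vanish modulo $\del$ at every order.

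\textbf{Step 2.} Set $\Delta_i:=(A_{s,0}A_{r,i})_1-(A_{r,0}A_{s,i})_1$. The terms $(\Delta_iK_j)_1$, rewritten by the inductive hypothesis, cancel exactly against the Jacobi images of the nested terms $(A_{r,i}(A_{s,u}K_v)_1)_1-(s\leftrightarrow r)$. The leftover $\del(A\,A)_1$ terms reassemble by skew-symmetry into $N\del\sum_n(A_{s,n}A_{r,N-n})_1$; this is the paper's $(i+j)$ splitting reappearing at the end.

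\textbf{Trade-off.} Your route costs an extra vertex-algebra identity and heavier bookkeeping. In exchange it decouples existence from commutativity with the $\s$-family.
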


\begin{proof}
It is useful to absorb the combinatorial factors into the fields. Define
\begin{equation}
    A_{s,0}:=J_{s+1},
    \qquad
    A_{s,n}:=\frac{M_{s,n}}{2^{n-1}(n-1)!}
    \quad(n\geq1),
\end{equation}
and
\begin{equation}
    K_n:=\frac{[W^{n+1}]}{2^n n!}
    \quad(n\geq0).
\end{equation}
The initial condition and the recursion~\eqref{eq: Msn recursion} then take the form
\begin{equation}\label{eq: compacted recursion}
    (n+1)A_{s,n+1}
    =\del^{-1}\sum_{i+j=n}(A_{s,i}K_j)_1
    \pmod{\del},
    \qquad n\geq0,
\end{equation}
where all of the indices in the sums of this form are non-negative. The fields $K_n$ satisfy the same recursion. Indeed, equation~\eqref{eq: W recursion} becomes
\begin{equation}\label{eq: W recursion now K recursion}
    (n+1)K_{n+1}
    =\frac12\sum_{i+j=n}(K_iK_j)_2
    \pmod{\del}.
\end{equation}
Using skew-symmetry,
\begin{equation}
    (AB)_1+(BA)_1=\del(AB)_2
    \pmod{\del^2},
\end{equation}
and the symmetry of the sum under $i\leftrightarrow j$, we obtain
\begin{equation}
    \sum_{i+j=n}(K_iK_j)_1
    =\frac12\del\sum_{i+j=n}(K_iK_j)_2
    \pmod{\del^2}.
\end{equation}
Consequently,
\begin{equation}
    (n+1)K_{n+1}
    =\del^{-1}\sum_{i+j=n}(K_iK_j)_1
    \pmod{\del}.
\end{equation}
One could choose $A_{\s,n}=K_n$, consistently with Proposition~\ref{prop: relation M fields}. Define
\begin{equation}
    C_{s,r;N}:=\sum_{i+j=N}(A_{s,i}A_{r,j})_1.
\end{equation}
For $N\geq1$, the condition $C_{s,r;N}=0\pmod{\del}$ is equivalent to~\eqref{eq: identiy first order pole}. We prove this condition and the existence of the fields $A_{s,N}$ simultaneously by induction. The base case is automatic
\begin{equation}
    C_{s,r;0}=(J_{s+1}J_{r+1})_1
    =0\pmod{\del}
\end{equation}
by the assumed commutativity of the undeformed KdV currents. Now let $N\geq1$. Suppose that the fields have been constructed through order $N-1$ and that, for every pair $s,r$,
\begin{equation}\label{eq: sum of AA}
    C_{s,r;m}=\sum_{i+j=m}(A_{s,i}A_{r,j})_1
    =0\pmod{\del},
    \qquad 0\leq m<N.
\end{equation}
Setting $r=\s$ and $m=N-1$ gives
\begin{equation}
    \sum_{i+j=N-1}(A_{s,i}K_j)_1
    =0\pmod{\del}.
\end{equation}
This proves existence by induction since the primitive of the left hand side is what defines the next $A_{s,N}$ in \eqref{eq: compacted recursion}, which can be done since we've shown that it must be a total derivative.
\begin{equation}\label{eq: prim A K = m+1 A}
    \del^{-1}\sum_{i+j=m}(A_{s,i}K_j)_1
    =(m+1)A_{s,m+1}\pmod{\del},
    \qquad 0\leq m<N.
\end{equation}
It remains to prove $C_{s,r;N}=0\pmod{\del}$. Consider 
\begin{equation}
    NC_{s,r;N}
    =\sum_{i+j=N}(i+j)(A_{s,i}A_{r,j})_1,
\end{equation}
separate the terms proportional to $i$ and $j$, shift the summation indices, and use skew-symmetry. This gives
\begin{equation}\label{eq: N CsrN}
    NC_{s,r;N}
    =\sum_{i+j=N-1}(j+1)
    \bigl\{(A_{s,i}A_{r,j+1})_1
    -(A_{r,i}A_{s,j+1})_1\bigr\}
    \pmod{\del}.
\end{equation}
Differentiating~\eqref{eq: prim A K = m+1 A} yields
\begin{equation}
    (j+1)\del A_{s,j+1}
    =\sum_{u+v=j}(A_{s,u}K_v)_1
    \pmod{\del^2}.
\end{equation}
This and $(\del A\,B)_1=0$ imply that differentiating \eqref{eq: N CsrN} yields
\begin{equation}
\begin{aligned}
    N\del C_{s,r;N}
    &=\sum_{i+j=N-1}(j+1)
    \bigl\{(A_{s,i}\del A_{r,j+1})_1
    -(A_{r,i}\del A_{s,j+1})_1\bigr\}\\
    &=\sum_{i+u+v=N-1}
    \bigl\{(A_{s,i}(A_{r,u}K_v)_1)_1
    -(A_{r,u}(A_{s,i}K_v)_1)_1\bigr\}
    \pmod{\del^2},
\end{aligned}
\end{equation}
where we have interchanged $i$ and $u$ in the second term of the last sum. Application of the Jacobi Identity brings us to
\begin{equation}
\begin{aligned}
    N\del C_{s,r;N}
    &=\sum_{i+u+v=N-1}
    \bigl((A_{s,i}A_{r,u})_1K_v\bigr)_1\\
    &=\sum_{v=0}^{N-1}(C_{s,r;N-1-v}K_v)_1\\
    &=0\pmod{\del^2}.
\end{aligned}
\end{equation}
The last equality follows from the induction hypothesis: each $C_{s,r;N-1-v}$ is a total derivative, whose first-order pole in the first OPE slot vanishes.

Thus $\del C_{s,r;N}\in\del^2V$. The field $C_{s,r;N}$ has positive conformal weight, so $\ker\del=\C\mathbf{1}$ excludes an integration constant. It follows that $C_{s,r;N}\in\del V$, completing the induction.
\end{proof}
We are now in a position to show that the commutativity of the deformed charges follows immediately.
\begin{theorem}
The $\s$-deformed KdV charges defined in~\eqref{eq: dkdv definition} satisfy
\begin{equation}
    [\dkdvI_s^{(\s)},\dkdvI_r^{(\s)}]=0
\end{equation}
to all orders in $\gamma$.
\end{theorem}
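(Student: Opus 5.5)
The plan is to reduce the theorem to Proposition~\ref{prop: relation first poles Ms}, using the standard relation between commutators of cylinder zero-modes and first-order OPE poles. For chiral fields $A,B$, the contour-deformation argument gives
\begin{equation}
    [\sfQ{A},\sfQ{B}]=\sfQ{(AB)_1},
\end{equation}
so the commutator vanishes whenever $(AB)_1\in\del V$.

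First I expand both charges as formal series, using the rescaled fields from the proof of Proposition~\ref{prop: relation first poles Ms}. Since $\frac{\gamma^n}{(n-1)!\,2^{n-1}}M_{s,n}=\gamma^nA_{s,n}$ for $n\geq1$, and $A_{s,0}=J_{s+1}$, we have
\begin{equation}
    \dkdvI_s^{(\s)}=\sum_{n\geq0}\gamma^n\sfQ{A_{s,n}}.
\end{equation}
The fields $A_{s,n}$ exist at every order by the same proposition. They are defined only modulo $\del$, but changing a representative by a total derivative does not change $\sfQ{A_{s,n}}$, so the charges are well defined.

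Next, the commutator is bilinear, and each coefficient of $\gamma^N$ involves only finitely many terms, so formal products are legitimate. The coefficient of $\gamma^N$ in $[\dkdvI_s^{(\s)},\dkdvI_r^{(\s)}]$ is therefore
\begin{equation}
    \sum_{i+j=N}[\sfQ{A_{s,i}},\sfQ{A_{r,j}}]
    =\sfQ{\textstyle\sum_{i+j=N}(A_{s,i}A_{r,j})_1}
    =\sfQ{C_{s,r;N}}.
\end{equation}
Proposition~\ref{prop: relation first poles Ms} gives $C_{s,r;N}\in\del V$ for all $N\geq1$, and for $N=0$ this holds by undeformed KdV commutativity. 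A total derivative integrates to zero over the cycle of the cylinder, so every coefficient vanishes, which proves the theorem.

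The main point needing care is the zero-mode commutator formula. I would justify it by writing $[\sfQ{A},\sfQ{B}]$ as a difference of radially ordered contour integrals and deforming to a contour around $w_2$. This picks up the residue
\begin{equation}
    \oint\frac{dw_1}{2\pi i}A(w_1)B(w_2)=(AB)_1(w_2),
\end{equation}
which is then integrated over $w_2$. On the cylinder this is valid because the OPE is translation invariant in $w$, and the fields being cylinder fields means no anomalous terms arise. One also has to check that $(AB)_1$ is insensitive to $A\to A+\del X$ modulo $\del$. This holds because $(\del X\,B)_1=0$ and $(A\,\del Y)_1=\del(AY)_1$, consistent with the identities already used in the paper.
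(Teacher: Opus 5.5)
Your proposal is correct and follows the paper's proof essentially step for step. You rewrite $\mathsf J_s=\sum_{n\geq0}\gamma^nA_{s,n}$, use $[\sfQ{A},\sfQ{B}]=\sfQ{(AB)_1}$ to identify the $\gamma^N$ coefficient of the commutator with $\sfQ{C_{s,r;N}}$, and then invoke undeformed KdV commutativity at $N=0$ and Proposition~\ref{prop: relation first poles Ms} for $N\geq1$. Your extra remarks, the contour justification of the zero-mode commutator formula and the independence of $C_{s,r;N}$ modulo $\del$ from the choice of representatives, are correct details that the paper leaves implicit.
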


\begin{proof}
In the notation of the preceding proof,
\begin{equation}
    \mathsf J_s=\sum_{n\geq0}\gamma^n A_{s,n},
    \qquad
    \dkdvI_s^{(\s)}=\sfQ{\mathsf J_s}.
\end{equation}
The cylinder zero-modes obey
\begin{equation}
    [\sfQ{A},\sfQ{B}]=\sfQ{(AB)_1}.
\end{equation}
Hence
\begin{equation}
\begin{aligned}
    [\dkdvI_s^{(\s)},\dkdvI_r^{(\s)}]
    &=[\sfQ{\mathsf J_s},\sfQ{\mathsf J_r}]\\
    &=\sum_{N\geq0}\gamma^N
    \sfQ{\sum_{i+j=N}(A_{s,i}A_{r,j})_1}.
\end{aligned}
\end{equation}
The coefficient at $N=0$ vanishes by commutativity of the undeformed charges. For every $N\geq1$, Proposition~\ref{prop: relation first poles Ms} shows that the field inside the zero-mode is a total derivative. Every coefficient of the commutator therefore vanishes.
\end{proof}

\subsection{Flow equations}

The recursion relations can also be expressed as differential equations in $\gamma$. This formulation gives a second proof of commutativity and establishes uniqueness of the deformed currents modulo total derivatives, for a fixed generating series $\mathsf K$. Recall that
\begin{align}
    \mathsf K(\gamma)
    &=\sum_{n\geq0}\frac{\gamma^n}{2^n n!}[W^{n+1}],\\
    \mathsf J_s(\gamma)
    &=J_{s+1}+\sum_{n\geq1}
    \frac{\gamma^n}{2^{n-1}(n-1)!}M_{s,n},
    \qquad
    \gamma=\frac{\alpha_\s\t^\s}{2\pi i}.
\end{align}
Write these series as
\begin{equation}
    \mathsf J_s=\sum_{n\geq0}j_{s,n}\gamma^n,
    \qquad
    \mathsf K=\sum_{n\geq0}k_n\gamma^n.
\end{equation}
Their coefficients have conformal weights such that
\begin{equation}\label{eq: ao coefficient grading}
    j_{s,n}\in V_{s+1+n(\s-1)},
    \qquad
    k_n\in V_{\s+1+n(\s-1)}.
\end{equation}
For the flow equations, we fix the representatives of $[W^n]$ by imposing the second-order-pole recursion~\eqref{eq: W recursion}.

\begin{proposition}\label{prop: ao flows}
With this choice of representatives, the recursion~\eqref{eq: W recursion} is formally equivalent to the flow equation
\begin{equation}\label{eq: ao K flow}
    \frac{\partial\mathsf K}{\partial\gamma}
    =\frac12(\mathsf K\mathsf K)_2,
    \qquad \mathsf K(0)=W.
\end{equation}
For this fixed $\mathsf K$, equations~\eqref{eq: Ms1} and~\eqref{eq: Msn recursion} are equivalent to
\begin{equation}\label{eq: djdgamma = }
    \frac{\partial\mathsf J_s}{\partial\gamma}
    =\del^{-1}(\mathsf J_s\mathsf K)_1
    \pmod{\del},
    \qquad \mathsf J_s(0)=J_{s+1}.
\end{equation}
Moreover, $\mathsf K$ itself satisfies the primitive flow
\begin{equation}\label{eq: ao K primitive flow}
    \frac{\partial\mathsf K}{\partial\gamma}
    =\del^{-1}(\mathsf K\mathsf K)_1
    \pmod{\del}.
\end{equation}
\end{proposition}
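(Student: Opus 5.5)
The plan is to compare coefficients of $\gamma^n$ on both sides of each flow equation, using the rescaled fields $K_n=[W^{n+1}]/(2^n n!)$ and $A_{s,n}$ introduced in the proof of Proposition~\ref{prop: relation first poles Ms}. With $k_n=K_n$ and $j_{s,n}=A_{s,n}$, the $\gamma^n$ coefficient of $\partial_\gamma\mathsf K$ is $(n+1)k_{n+1}$. By bilinearity of the second-order pole, the $\gamma^n$ coefficient of $\tfrac12(\mathsf K\mathsf K)_2$ is $\tfrac12\sum_{i+j=n}(k_ik_j)_2$.

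\textbf{The $\mathsf K$ flow.} Equating these coefficients gives exactly~\eqref{eq: W recursion now K recursion}. That relation is~\eqref{eq: W recursion} rewritten after absorbing factorials. To confirm this, substitute $[W^{m}]=2^{m-1}(m-1)!\,K_{m-1}$ into~\eqref{eq: W recursion} and check that
\[
\binom{n-2}{k-1}\frac{2^{k-1}(k-1)!\,2^{n-k-1}(n-k-1)!}{2^{n-1}(n-1)!}=\frac{1}{2(n-1)}.
\]
This yields $(n-1)\cdot 2K_{n-1}\cdot\tfrac12=\tfrac12\sum(K_iK_j)_2$ up to the correct index shift. Since we fixed representatives by imposing~\eqref{eq: W recursion} on the nose, the equivalence holds exactly, not merely mod $\del$, together with the initial condition $k_0=W$. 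Because both sides are formal power series, the equivalence runs in both directions.

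\textbf{The $\mathsf J_s$ flow.} Equating coefficients of $\gamma^n$ in $\partial_\gamma\mathsf J_s=\del^{-1}(\mathsf J_s\mathsf K)_1$ gives $(n+1)A_{s,n+1}=\del^{-1}\sum_{i+j=n}(A_{s,i}K_j)_1$ mod $\del$. This is~\eqref{eq: compacted recursion}. I will check the factor bookkeeping showing that~\eqref{eq: compacted recursion} is equivalent to~\eqref{eq: Ms1} and~\eqref{eq: Msn recursion}:
\begin{itemize}
\item For $n=0$ it reads $A_{s,1}=M_{s,1}=\del^{-1}(J_{s+1}W)_1$.
\item For $n\geq1$, multiply by $2^{n}n!/(n+1)$. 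The term with $i=0$ reproduces $\tfrac1{n+1}\del^{-1}(J_{s+1}[W^{n+1}])_1$.
\item The terms with $i\geq1$ carry the coefficient $\frac{2^n n!}{2^{i-1}(i-1)!\,2^{j}j!}=2n\binom{n-1}{i-1}$, matching the recursion at order $n+1$.
\end{itemize}
The primitives exist by Proposition~\ref{prop: relation first poles Ms}, so the right-hand side is well defined modulo $\del$.

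\textbf{The primitive $\mathsf K$ flow.} Apply skew-symmetry $(AB)_1+(BA)_1=\del(AB)_2$ mod $\del^2$ to $(\mathsf K\mathsf K)_1$, with both slots equal as formal series. This gives $(\mathsf K\mathsf K)_1=\tfrac12\del(\mathsf K\mathsf K)_2$ mod $\del^2$. Applying $\del^{-1}$, which is well defined mod $\del$ by injectivity of $\del$ on positive weight, and using the first flow gives $\partial_\gamma\mathsf K=\del^{-1}(\mathsf K\mathsf K)_1$ mod $\del$. Equivalently, this is the $s=\s$ case of the $\mathsf J_s$ flow, consistent with Proposition~\ref{prop: relation M fields}.

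The only step needing care is the factorial/binomial bookkeeping together with the index shift. I will also keep track that the $\mathsf K$ flow is an exact identity for the fixed representatives, while the $\mathsf J_s$ flow holds only modulo total derivatives, and that grading by~\eqref{eq: ao coefficient grading} makes the coefficient matching degree by degree legitimate.
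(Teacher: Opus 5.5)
Your proposal is correct and follows essentially the same route as the paper. You compare coefficients of $\gamma^n$ in $\partial_\gamma\mathsf K=\frac12(\mathsf K\mathsf K)_2$ and in $\partial_\gamma\mathsf J_s=\del^{-1}(\mathsf J_s\mathsf K)_1$, and then use skew-symmetry to get $\del^{-1}(\mathsf K\mathsf K)_1=\frac12(\mathsf K\mathsf K)_2 \pmod{\del}$. The only difference is cosmetic: you route through the rescaled fields $A_{s,n}$, $K_n$ and their compact recursion from the proof of Proposition~\ref{prop: relation first poles Ms}, whereas the paper regroups the Cauchy product directly in terms of $M_{s,n}$ and $[W^n]$; your factorial and binomial bookkeeping checks out.
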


\begin{proof}
Differentiating $\mathsf J_s$ term by term gives
\begin{equation}
    \frac{\partial\mathsf J_s}{\partial\gamma}
    =\sum_{n\geq1}
    \frac{\gamma^{n-1}}{2^{n-1}(n-1)!}\,nM_{s,n}.
\end{equation}
On the other hand, bilinearity of the first-order pole yields
\begin{equation}
\begin{aligned}
    (\mathsf J_s\mathsf K)_1
    ={}&\sum_{n\geq1}
    \frac{\gamma^{n-1}}{2^{n-1}(n-1)!}
    (J_{s+1}[W^n])_1\\
    &+\sum_{k,\ell\geq1}
    \frac{\gamma^{k+\ell-1}}
    {2^{k+\ell-2}(k-1)!(\ell-1)!}
    (M_{s,k}[W^\ell])_1.
\end{aligned}
\end{equation}
Setting $n=k+\ell$ in the second sum gives
\begin{equation}
\begin{aligned}
    (\mathsf J_s\mathsf K)_1
    ={}&(J_{s+1}W)_1\\
    &+\sum_{n\geq2}
    \frac{\gamma^{n-1}}{2^{n-1}(n-1)!}
    \Bigg\{(J_{s+1}[W^n])_1
    +2(n-1)\sum_{k=1}^{n-1}\binom{n-2}{k-1}
    (M_{s,k}[W^{n-k}])_1\Bigg\}.
\end{aligned}
\end{equation}
Each expression in braces is the input whose primitive defines $nM_{s,n}$ in~\eqref{eq: Msn recursion}. Together with the initial condition~\eqref{eq: Ms1}, this gives
\begin{equation}
    (\mathsf J_s\mathsf K)_1
    =\sum_{n\geq1}
    \frac{\gamma^{n-1}}{2^{n-1}(n-1)!}\,n\del M_{s,n}
    =\del\frac{\partial\mathsf J_s}{\partial\gamma}
    \pmod{\del^2}.
\end{equation}
Taking primitives order by order proves~\eqref{eq: djdgamma = }. Conversely, coefficient comparison in that flow equation recovers~\eqref{eq: Ms1} and~\eqref{eq: Msn recursion} modulo total derivatives. For $\mathsf K$, we have
\begin{equation}
\begin{aligned}
    \frac{\partial\mathsf K}{\partial\gamma}
    &=\sum_{n\geq0}\frac{\gamma^n}{2^{n+1}n!}[W^{n+2}],\\
    \frac12(\mathsf K\mathsf K)_2
    &=\sum_{n\geq0}\frac{\gamma^n}{2^{n+1}n!}
    \sum_{a=0}^{n}\binom{n}{a}
    \bigl([W^{a+1}][W^{n-a+1}]\bigr)_2.
\end{aligned}
\end{equation}
Equality of the coefficients is precisely the exact representative version of~\eqref{eq: W recursion}, proving~\eqref{eq: ao K flow}. Finally, skew-symmetry gives
\begin{equation}
    (\mathsf K\mathsf K)_1
    =\frac12\del(\mathsf K\mathsf K)_2
    -\frac14\del^2(\mathsf K\mathsf K)_3
    +\frac1{12}\del^3(\mathsf K\mathsf K)_4-\cdots.
\end{equation}
A local primitive therefore exists, and
\begin{equation}
    \del^{-1}(\mathsf K\mathsf K)_1
    =\frac12(\mathsf K\mathsf K)_2
    \pmod{\del}.
\end{equation}
Combining this with~\eqref{eq: ao K flow} proves~\eqref{eq: ao K primitive flow}.
\end{proof}
The local-primitive operation entering our flows is the operation used in Dijkgraaf’s contact-term construction \cite{Dijkgraaf:1996iy}, subsequently formulated in vertex-algebra language in \cite{Gui:2025kbd}. In Dijkgraaf’s notation, \(\nabla_B A=\partial^{-1}(AB)_1\), so our current flow reads \(\partial_\gamma\mathsf J_s=\nabla_{\mathsf K}\mathsf J_s\) modulo total derivatives. The contact-algebra framework assumes a suitable closed collection of fields whose first-order OPE poles are total derivatives for Dijkgraaf's integrability condition. Here we do not assume that the individual coefficient fields satisfy this condition pairwise. Instead, for the prescribed generating series \(\mathsf K\), we prove inductively that the complete coefficient of each power of \(\gamma\) in \((\mathsf J_s\mathsf K)_1\) admits a local primitive, and that the resulting formal charges commute at a common deformation parameter. Our construction uses this established operation while proving the existence and commutativity required for the deformed hierarchy. The following lemma is useful to explain why the flows preserve commutativity.

\begin{lemma}\label{lem: ao commutativity}
Let $A,B,K\in V$ be homogeneous fields of weights $h_A,h_B,h_K\geq2$. Suppose that their first-order poles are total derivatives,
\begin{equation}
    (AB)_1=\del H,
    \qquad (AK)_1=\del F,
    \qquad (BK)_1=\del G,
\end{equation}
with homogeneous local primitives $H,F,G$. Then
\begin{equation}
    (FB)_1+(AG)_1\in\del V.
\end{equation}
The same conclusion holds coefficient by coefficient for formal series, or for series truncated at any order, whose coefficient weights are respectively
\begin{equation}
    h_A+n(\s-1),
    \qquad h_B+n(\s-1),
    \qquad h_K+n(\s-1).
\end{equation}
\end{lemma}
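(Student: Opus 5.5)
The plan is to copy the differentiate-then-Jacobi argument from the proof of Proposition~\ref{prop: relation first poles Ms}. Set
\[
X:=(FB)_1+(AG)_1 .
\]
I will show that $\del X\in\del^2V$. Then injectivity of $\del$ on fields of positive weight removes any constant of integration and gives $X\in\del V$. The weights $h_F=h_A+h_K-1$ and $h_G=h_B+h_K-1$ are at least $3$, so $X$ is homogeneous of positive weight.

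First, use that $\del$ acts as a derivation on the first-order pole. We have $(\del A\,B)_1=0$, which is the fact already used in the preceding proof. Translation covariance then gives $\del(AB)_1=(A\,\del B)_1$ modulo terms of the form $(\del A\,B)_1$. Applying this to $X$:
\[
\del X=(\del F\,B)_1+\ldots
\]
This form is awkward because $(\del F\,B)_1=0$. So I will rewrite $(FB)_1$ with skew-symmetry before differentiating:
\[
(FB)_1=-(BF)_1 \pmod{\del}.
\]
The correction is a total derivative, and after one more $\del$ it lies in $\del^2V$. Hence
\[
\del X=-(B\,\del F)_1+(A\,\del G)_1 \pmod{\del^2}.
\]
Substituting $\del F=(AK)_1$ and $\del G=(BK)_1$ gives
\[
\del X=(A(BK)_1)_1-(B(AK)_1)_1 \pmod{\del^2}.
\]

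Second, apply the Jacobi identity for first-order poles, i.e. the commutator of zero modes at the level of fields:
\[
(A(BK)_1)_1-(B(AK)_1)_1=((AB)_1K)_1 .
\]
This is exact in any vertex algebra, since $A_{(0)}$ acts as a derivation of all products. The right-hand side is $(\del H\,K)_1=0$. Therefore $\del X\in\del^2V$. Taking a primitive and using $\ker\del=\C\mathbf 1$ yields $X\in\del V$.

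The step I expect to be most delicate is controlling the skew-symmetry corrections modulo $\del^2$. The full skew-symmetry formula is
\[
(BF)_1=-(FB)_1+\del(FB)_2-\tfrac12\del^2(FB)_3+\cdots,
\]
and all corrections beyond $-(FB)_1$ are total derivatives. After differentiation they are therefore in $\del^2V$, so the argument survives. I would state this carefully as $(AB)_1+(BA)_1\in\del V$ and check that no $\del^{-1}$ of a non-exact field is ever taken. The primitives $H,F,G$ enter only through $\del H,\del F,\del G$, so their ambiguity by constants or total derivatives is harmless: adding $\del Y$ to $F$ changes $X$ by $(\del Y\,B)_1=0$.

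For the formal-series statement, the argument is bilinear and uses only identities that hold degree by degree. The coefficient of $\gamma^N$ in $X$ is a finite sum of terms of fixed total weight, by the stated grading. I will apply the same chain to each graded piece. Jacobi and skew-symmetry are multilinear, and injectivity of $\del$ holds on each positive-weight $V_H$. The result then follows coefficient by coefficient, and likewise for truncations, since truncation commutes with these bilinear operations at each fixed order.
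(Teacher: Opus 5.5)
Your argument is correct and is essentially the paper's. Both use $(\del X\,Y)_1=0$, the Jacobi identity $(A(BK)_1)_1-(B(AK)_1)_1=((AB)_1K)_1=((\del H)K)_1=0$, skew-symmetry and injectivity of $\del$ at positive weight. The only difference is that the paper differentiates $(AG)_1-(BF)_1$ first, obtains the exact identity $(AG)_1=(BF)_1$, and applies skew-symmetry afterwards, whereas you apply skew-symmetry first and work modulo $\del^2$.

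One small slip: since $\del F=(AK)_1$ has weight $h_A+h_K-1$, the primitive has $h_F=h_A+h_K-2$, not $h_A+h_K-1$, and likewise for $G$. So $X$ has weight $h_A+h_B+h_K-3$, which is still positive, and the argument is unaffected.
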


\begin{proof}
Acting with derivatives and considering the Jacobi identity, one finds
\begin{equation}
\begin{aligned}
    \del\bigl((AG)_1-(BF)_1\bigr)
    &=(A(\del G))_1-(B(\del F))_1\\
    &=(A(BK)_1)_1-(B(AK)_1)_1\\
    &=((AB)_1K)_1\\
    &=((\del H)K)_1=0.
\end{aligned}
\end{equation}
The field $(AG)_1-(BF)_1$ has weight $h_A+h_B+h_K-3>0$. Injectivity of $\del$ at positive weight therefore implies
\begin{equation}
    (AG)_1=(BF)_1.
\end{equation}
Since $(BF)_1=-(FB)_1\pmod{\del}$ by skew-symmetry, the result follows. For formal or truncated series, the same calculation applies coefficient by coefficient. The coefficient of $\gamma^n$ in $(AG)_1-(BF)_1$ has weight
\begin{equation}
    h_A+h_B+h_K-3+n(\s-1)>0,
\end{equation}
so the injectivity argument remains valid at every order under consideration.
\end{proof}

We can now establish existence, uniqueness, and commutativity directly from the flow equations.

\begin{proposition}\label{prop: ao hierarchy}
Recall that the undeformed KdV currents have first-order OPE poles in $\del V$, and fix the full series $\mathsf K$ by the exact flow~\eqref{eq: ao K flow}. Then:
\begin{enumerate}
    \item All local primitives required by~\eqref{eq: Ms1} and~\eqref{eq: Msn recursion} exist.
    \item The currents $\mathsf J_s$ satisfying~\eqref{eq: djdgamma = }, with initial values $\mathsf J_s(0)=J_{s+1}$ and grading~\eqref{eq: ao coefficient grading}, are uniquely determined modulo total derivatives.
    \item Every pair of fields in $\{\mathsf K,\mathsf J_1,\mathsf J_3,\ldots\}$ commutes locally:
    \begin{equation}
        (AB)_1\in\del V[\![\gamma]\!],
        \qquad A,B\in\{\mathsf K,\mathsf J_1,\mathsf J_3,\ldots\}.
    \end{equation}
\end{enumerate}
Here $\del V[\![\gamma]\!]$ denotes the formal series whose coefficients are total derivatives.
\end{proposition}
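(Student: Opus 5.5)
We prove all three items together by induction on the order in $\gamma$, and we use Lemma~\ref{lem: ao commutativity} as the engine at each step. Write $[X]_n$ for the coefficient of $\gamma^n$. The inductive hypothesis at order $N$ has two parts. First, the coefficients $j_{s,n}$ for $n\leq N$ have been constructed. Second, for every pair $A,B\in\{\mathsf K,\mathsf J_1,\mathsf J_3,\ldots\}$ and every $m\leq N$, we have $[(AB)_1]_m\in\del V$. The base case $N=0$ holds because $k_0=W=J_{\s+1}$ and $j_{s,0}=J_{s+1}$ are undeformed KdV currents, so all their mutual first-order poles are total derivatives.

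\emph{Existence.} By~\eqref{eq: djdgamma = }, the coefficient $(N+1)j_{s,N+1}$ is the primitive of $[(\mathsf J_s\mathsf K)_1]_N$. This coefficient involves only $j_{s,n}$ and $k_n$ with $n\leq N$, and by the hypothesis (with $B=\mathsf K$) it lies in $\del V$. So the primitive exists and is local. For $\mathsf K$ itself, Proposition~\ref{prop: ao flows} already gives $(\mathsf K\mathsf K)_1\in\del V[\![\gamma]\!]$ to all orders, so no induction is needed there.

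\emph{Commutativity at order $N+1$.} We compute the flow of the first-order pole:
\begin{equation}
\frac{\partial}{\partial\gamma}(\mathsf J_s\mathsf J_r)_1=(\del^{-1}(\mathsf J_s\mathsf K)_1\,\mathsf J_r)_1+(\mathsf J_s\,\del^{-1}(\mathsf J_r\mathsf K)_1)_1 \pmod{\del}.
\end{equation}
Here the ambiguity in $\del^{-1}$ is harmless, since $(\del X\,Y)_1=0$ and $(X\,\del Y)_1\in\del V$. Take the coefficient of $\gamma^N$. Each term in it is a finite sum of pieces of the form $(F B)_1+(A G)_1$, where $A$ and $B$ are coefficients of $\mathsf J_s$ and $\mathsf J_r$, and $F$ and $G$ are primitives of pieces of $(A\mathsf K)_1$ and $(B\mathsf K)_1$. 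Working with truncated series, this is exactly the setting of Lemma~\ref{lem: ao commutativity} applied to $A=\mathsf J_s$, $B=\mathsf J_r$, $K=\mathsf K$ truncated at order $N$. The hypotheses of the lemma are exactly the inductive hypothesis at orders $\leq N$, and the weight grading~\eqref{eq: ao coefficient grading} matches the grading required in the lemma. Therefore $(N+1)[(\mathsf J_s\mathsf J_r)_1]_{N+1}\in\del V$.

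\emph{Pairs involving $\mathsf K$.} The same computation handles the pair $(\mathsf J_s,\mathsf K)$. We use the flow~\eqref{eq: ao K primitive flow} for $\mathsf K$, which has the same primitive form, and apply the lemma with $B=K=\mathsf K$.

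\emph{Expected obstacle.} The delicate step is the bookkeeping that makes the coefficient-wise application of the lemma legitimate. The $\gamma^N$ coefficient of the flowed pole mixes products of coefficients from different orders. Only the complete sums $[(\mathsf J_s\mathsf K)_1]_m$ are known to be total derivatives, not the individual products. I would handle this by applying the lemma to the full truncated series rather than to individual coefficients, exactly as the lemma's final clause permits. Positivity of weights at each order, which follows from $h\geq2$ and $\s\geq1$, supplies the injectivity of $\del$ that the lemma needs. A cross-check is available: this argument is the generating-function form of the Jacobi computation in Proposition~\ref{prop: relation first poles Ms}.

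\emph{Uniqueness.} Suppose $\mathsf J_s$ and $\mathsf J_s'$ both solve~\eqref{eq: djdgamma = } with the same initial value and grading. Then their difference $D$ satisfies $\partial_\gamma D=\del^{-1}(D\mathsf K)_1$ modulo $\del$, with $D(0)\in\del V$. By induction, if $[D]_n\in\del V$ for all $n\leq N$, then $(D\mathsf K)_1$ up to order $N$ is a first-order pole with a total derivative in the first slot, so it vanishes. Hence $(N+1)[D]_{N+1}$ is the primitive of something in $\del^2V$, so $[D]_{N+1}\in\del V$. Here the grading and $\ker\del=\C\mathbf 1$ exclude any constant ambiguity at positive weight. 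This shows $\mathsf J_s$ is determined modulo total derivatives.
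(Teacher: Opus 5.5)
Your proposal is correct and follows essentially the same route as the paper. Both arguments run a simultaneous induction on the power of $\gamma$: $j_{s,N+1}$ is defined as the primitive of the degree-$N$ coefficient of $(\mathsf J_s\mathsf K)_1$, local commutativity is propagated by differentiating $(AB)_1$ in $\gamma$ and applying Lemma~\ref{lem: ao commutativity} to truncated series (with the primitive flow of $\mathsf K$ handling pairs involving $\mathsf K$), and uniqueness follows order by order from $((\del X)\mathsf K)_1=0$. The one step you leave implicit is item~1 as stated: by the coefficient identification in Proposition~\ref{prop: ao flows}, these flow primitives are exactly the primitives the original recursion requires, which the paper states in one line.
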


\begin{proof}
We prove existence and local commutativity simultaneously by induction on the power of $\gamma$, keeping $\mathsf K$ fixed throughout. We then prove uniqueness.

At order zero, $j_{s,0}=J_{s+1}$ and $k_0=W$ are undeformed KdV currents, so every pair commutes locally. Suppose that the currents $\mathsf J_s$ have been constructed through degree $N$, satisfy their flow equations through degree $N-1$, and commute locally with one another and with $\mathsf K$ through degree $N$. For $N=0$, the condition on the flow equations is empty. By the induction hypothesis, there are homogeneous local fields $U_{s,n}$ such that
\begin{equation}
    (\mathsf J_s\mathsf K)_1
    =\sum_{n=0}^{N}\del U_{s,n}\gamma^n
    +O(\gamma^{N+1}).
\end{equation}
The coefficient of $\gamma^N$ depends only on the fields already constructed. We may therefore define
\begin{equation}\label{eq: ao next J}
    j_{s,N+1}=\frac{U_{s,N}}{N+1}\pmod{\del}.
\end{equation}
This extends the flow equation through degree $N$. The primitive $U_{s,N}$ has weight $s+1+(N+1)\\(\s-1)>0$, so the prescribed grading is preserved. The corresponding coefficient of $\mathsf K$ is already fixed exactly
\begin{equation}
    k_{N+1}
    =\frac{1}{2(N+1)}\sum_{a=0}^{N}(k_a k_{N-a})_2.
\end{equation}
By~\eqref{eq: ao K primitive flow}, this coefficient also satisfies the primitive flow modulo total derivatives. Thus every member of the extended family satisfies its primitive flow through degree $N$.

To propagate commutativity, choose any two members $A,B$ of this family. Modulo $\gamma^{N+1}$, the induction hypothesis gives
\begin{equation}
    (A\mathsf K)_1=\del F,
    \qquad (B\mathsf K)_1=\del G,
    \qquad (AB)_1=\del H,
\end{equation}
while the extended flow equations give
\begin{equation}
    \frac{\partial A}{\partial\gamma}= F,
    \qquad
    \frac{\partial B}{\partial\gamma}= G
    \pmod{\del,\gamma^{N+1}}.
\end{equation}
Derivative corrections do not affect the calculation modulo $\del$, since
\begin{equation}
    ((\del X)B)_1=0,
    \qquad (A(\del Y))_1=\del(AY)_1.
\end{equation}
Applying Lemma~\ref{lem: ao commutativity} to these truncated series therefore yields
\begin{equation}
\begin{aligned}
    \frac{\partial}{\partial\gamma}(AB)_1
    &=\left(\frac{\partial A}{\partial\gamma}B\right)_1
    +\left(A\frac{\partial B}{\partial\gamma}\right)_1\\
    &=(FB)_1+(AG)_1\\
    &=0\pmod{\del,\gamma^{N+1}}.
\end{aligned}
\end{equation}
Writing
\begin{equation}
    (AB)_1=\sum_{n\geq0}C_n\gamma^n
\end{equation}
and comparing the coefficient of $\gamma^N$ gives
\begin{equation}
    (N+1)C_{N+1}=0\pmod{\del}.
\end{equation}
Hence $C_{N+1}\in\del V$, and every pair commutes locally through degree $N+1$. This closes the induction. Proposition~\ref{prop: ao flows} then also establishes the existence of every local primitive in the original recursion.

For uniqueness, let $\mathsf J_s$ and $\widetilde{\mathsf J}_s$ be two solutions for the same fixed $\mathsf K$, with the same initial value and coefficient grading. Suppose inductively that
\begin{equation}
    j_{s,n}-\widetilde j_{s,n}\in\del V,
    \qquad 0\leq n\leq N.
\end{equation}
This holds at $N=0$. A derivative in the first OPE slot has no first-order pole, so the $\gamma^N$ term in $((\mathsf J_s-\widetilde{\mathsf J}_s)\mathsf K)_1$ vanishes. The degree-$N$ inputs to the two primitive flows therefore agree exactly. Their homogeneous primitives also agree: their difference lies in $\ker\del$, while their weight $s+1+(N+1)(\s-1)$ is positive. Comparing the flow equations at degree $N$ now gives
\begin{equation}
    (N+1)\bigl(j_{s,N+1}-\widetilde j_{s,N+1}\bigr)
    =0\pmod{\del}.
\end{equation}
Induction proves uniqueness modulo total derivatives. This uniqueness statement holds for the prescribed full series $\mathsf K$; it does not permit arbitrary changes to its derivative terms.
\end{proof}

Since the action of extracting cylinder zero-modes, $\sfQ{\cdot}$, annihilates total derivatives, the proposition gives a second proof that
\begin{equation}
\begin{aligned}
    [\dkdvI_s^{(\s)},\dkdvI_r^{(\s)}]
    &=[\sfQ{\mathsf J_s},\sfQ{\mathsf J_r}]\\
    &=\sfQ{(\mathsf J_s\mathsf J_r)_1}=0
\end{aligned}
\end{equation}
to all orders in $\gamma$. Some remarks are in order.

\medskip
\noindent\emph{Recovery of the deformed Hamiltonian.}
The case $s=1$ recovers the Hamiltonian obtained from the modular transformation. To see this directly from the flow, set
\begin{equation}
    \widehat{\mathsf J}_1:=T+\gamma\mathsf K.
\end{equation}
This field has the required initial value and coefficient grading. Since $(T\mathsf K)_1=\del\mathsf K$, equation~\eqref{eq: ao K primitive flow} gives
\begin{equation}
\begin{aligned}
    \del^{-1}(\widehat{\mathsf J}_1\mathsf K)_1
    &=\mathsf K+\gamma\del^{-1}(\mathsf K\mathsf K)_1\\
    &=\mathsf K+\gamma\frac{\partial\mathsf K}{\partial\gamma}\\
    &=\frac{\partial\widehat{\mathsf J}_1}{\partial\gamma}
    \pmod{\del}.
\end{aligned}
\end{equation}
Uniqueness therefore implies
\begin{equation}
    \mathsf J_1= T+\gamma\mathsf K\pmod{\del},
    \qquad
    M_{1,n}=[W^n]\pmod{\del},
\end{equation}
in agreement with Proposition~\ref{prop: relation M fields}.

\medskip\noindent
\emph{Dependence on the local representatives.}
The uniqueness statement above fixes the full local series
$\mathsf K$, rather than only the integrated Hamiltonian
$\dkdvI_1^{(\s)}=\I_1+\gamma\sfQ{\mathsf K}$.
Indeed, a coefficientwise change
\begin{equation}
    \widetilde{\mathsf K}
    =\mathsf K+\del\mathsf X,
    \qquad \mathsf X(0)=0,
\end{equation}
with the prescribed grading leaves this Hamiltonian
unchanged, but modifies the primitive operation according to
\begin{equation}
    \del^{-1}(A\widetilde{\mathsf K})_1
    =
    \del^{-1}(A\mathsf K)_1+(A\mathsf X)_1.
\end{equation}
Thus the primitive flow
\eqref{eq: ao K primitive flow} is preserved modulo total
derivatives precisely when
\begin{equation}
    (\mathsf K\mathsf X)_1=0\pmod{\del}.
\end{equation}
Under this condition, the existence and commutativity
argument also applies to the modified generating series.
For the original currents $\mathsf J_s$ to remain solutions
of the modified versions of the flows \eqref{eq: djdgamma = }, one additionally needs
\begin{equation}
    (\mathsf J_s\mathsf X)_1=0\pmod{\del}
    \qquad\text{for every }s.
\end{equation}
These conditions hold automatically when
$\mathsf X=\del\mathsf Y$; in that case uniqueness implies
that the currents agree modulo total derivatives and
the integrated charges are unchanged.

Throughout this work, we fix $\mathsf K$ by the exact
recursion \eqref{eq: W recursion}. Together with the
prescribed initial currents and grading, this determines
the higher charges uniquely. We do not claim uniqueness
from the integrated Hamiltonian alone, nor do we classify
other compatible choices of local representatives -- we leave this topic for future work.

\medskip
\noindent\emph{Relation to boost flows.} Finally, we remark on a formal similarity that the local flow equations have with boost deformations in the context of integrable spin-chains. If one considers the Noether charges associated to the local fields on the plane, and not the cylinder charges, then we find an equation with the same structure as in \cite{Bargheer:2008jt,Bargheer:2009xy}. Suppose that $(\mathsf J_s \mathsf K)_1 = \del \mathsf F_s$. Then the local flow equation \eqref{eq: djdgamma = } reads
\begin{equation}
    \frac{\partial\mathsf J_s}{\partial\gamma}
    =\mathsf F_s\pmod{\del}.
\end{equation}
Now consider the action of extracting a planar charge. For a local field $A(z)$, the planar charge is 
\begin{equation}
    \mathsf P (A) = \oint \frac{dz}{2\pi i }A(z).
\end{equation}
Then acting on the flow equation with $\mathsf P(\cdot)$ gives
\begin{equation}
    \frac{\partial\mathsf P (\mathsf J_s)}{\partial\gamma}
    =\mathsf P (\mathsf F_s).
\end{equation}
One can construct a boost operator, weighted by position, in the same spirit as \cite{Bargheer:2008jt}, 
\begin{equation}
    \mathcal{X} [\mathsf K] = \oint \frac{dz}{2\pi i}z \mathsf K(z),
\end{equation}
which can be shown to give a boost equation
\begin{equation}
\frac{\partial\mathsf P (\mathsf J_s)}{\partial\gamma}
     = [\mathcal  X [\mathsf K], \mathsf P ( \mathsf J_s)].
\end{equation}
Although this equation is similar to that which appears in \cite{Bargheer:2008jt}, it is worth noting that it is not literally the same thing. Boost operators are normally constructed as integrals on the real line in full bulk theories (or a sum over every site in the spin chain theory), whereas our construction is focusing only on the chiral side of a CFT, where the integrals are done on a closed contour. Furthermore, an attempt to construct the same equation for the cylinder charges leads to an obstruction -- one needs to introduce a cut on the cylinder since the integrand is not periodic. 

Nevertheless, it would be interesting to pursue the consequences of this equation, and discover any links to the statistical physics context. This is left to future work.

\section{The $\sigma$-Deformed KdV Hierarchy in the Free Fermion Model}\label{sec: The Deformed Hierarchy in FF}

We now apply the construction to the $c=1/2$ free fermion model. In this theory, the KdV currents admit a bilinear realisation in terms of a free fermion~\cite{Downing:2021mfw}, and the deformed charges can be written explicitly as linear combinations of the undeformed KdV charges. The TBA equation is trivial, so the solution is explicit, allowing us to compare the two descriptions to all orders in the deformation parameter. We first derive the asymptotic expansion of the ground-state eigenvalues from the TBA, then recover the same coefficients from the local recursion.

We will use two normalisations. In the BLZ normalisation, the coefficient of the leading power of the stress tensor is one:
\begin{equation}
    \I_{2n-1}=\sfQ{J_{2n}},
    \qquad J_{2n}=(T^n)+\cdots,
\end{equation}
where $(T^n)$ denotes the right-nested normal-ordered product of $n$ stress tensors. In the bilinear normalisation, the currents are given in terms of the free Majorana fermion field $\psi(w)$ of conformal weight $1/2$.
\begin{equation}
    \I_{2n-1}^{\mathrm{bil}}=\sfQ{J_{2n}^{\mathrm{bil}}},
    \qquad
    J_{2n}^{\mathrm{bil}}
    =-\frac12(\psi\del^{2n-1}\psi)
    \pmod{\del}.
\end{equation}
The charges in these two conventions are related by
\begin{equation}
    \I_s^{\mathrm{bil}}=N_s\I_s.
\end{equation}
We determine the conversion factor $N_s$ below, when matching the two expansions.

\subsection{Ground-state expansion from the TBA}

Consider a GGE with a single spin-$\s$ KdV charge inserted,
\begin{equation}
    \Tr\!\left(q^{L_0-\frac{c}{24}}e^{\alpha_\s\I_\s}\right),
\end{equation}
with the charge in BLZ normalisation. Throughout this section, $s$ and $\s$ are positive odd integers. Denote the undeformed ground-state eigenvalues in the Neveu--Schwarz and Ramond sectors by $I_s^+$ and $I_s^-$, respectively:
\begin{equation}
\begin{aligned}
    \I_s\ket{0}&=I_s^+\ket{0},\\
    \I_s\ket{\tfrac{1}{16}}&=I_s^-\ket{\tfrac{1}{16}}.
\end{aligned}
\end{equation}
The undeformed pseudoenergy is $\epsilon(\theta)=e^\theta$, giving
\begin{equation}
\begin{aligned}
    I_s^\pm
    &=B_s\int_{\R}d\theta\,e^{s\theta}
    \log\bigl(1\pm e^{-e^\theta}\bigr)\\
    &=\begin{cases}
        B_s\Gamma(s)(1-2^{-s})\zeta(s+1),&\text{Neveu--Schwarz},\\
        -B_s\Gamma(s)\zeta(s+1),&\text{Ramond}.
    \end{cases}
\end{aligned}
\end{equation}
The normalisation constant follows from~\cite[(4.36)]{Nigro:2007tz}:
\begin{equation}
    B_s=
    \frac{(-1)^{\frac{s+1}{2}}2^{s-1}
    \Gamma\!\left(\frac{s+3}{2}\right)
    \Gamma\!\left(\frac{3s+2}{2}\right)}
    {3^{\frac{s+1}{2}}\pi^{s+\frac32}\Gamma(2s)}.
\end{equation}
The spin-$\s$ insertion adds a driving term to the mirror-channel TBA, so that
\begin{equation}
    \epsilon(\theta)=e^\theta+\kappa_\s e^{\s\theta}.
\end{equation}
We will later fix the relation between $\kappa_\s$ and $\alpha_\s$ by matching the first-order term in the deformed Hamiltonian. The deformed ground-state eigenvalues are
\begin{equation}
    I_s^\pm(\kappa_\s)
    =B_s\int_{\R}d\theta\,e^{s\theta}
    \log\bigl(1\pm e^{-e^\theta-\kappa_\s e^{\s\theta}}\bigr).
\end{equation}
We take $\kappa_\s\geq0$ so that the integral converges, and compare the small-coupling expansion of this integral with the formal series defined by the local recursion. We can perform the change of variables $x=e^\theta$, expand in $\kappa_\s$, and evaluate the coefficients, so as to obtain
\begin{equation}\label{eq: Is FF TBA (kappa sigma)}
    I_s^\pm(\kappa_\s)
    \sim\sum_{m=0}^{\infty}C_m I_{s_m}^\pm\kappa_\s^m,
    \qquad s_m:=s+m(\s-1),
\end{equation}
where
\begin{equation}
    C_m
    =\frac{(-1)^m\Gamma(s+\s m)}{m!\,\Gamma(s_m)}
    \frac{B_s}{B_{s_m}}
    =(-1)^m\binom{s_m+m-1}{m}\frac{B_s}{B_{s_m}}.
\end{equation}
The coefficients are the same in both sectors. They suggest the formal operator series
\begin{equation}\label{eq: Deformed KdV Charge Free Fermion TBA.}
    \I_s^{(\s)}(\kappa_\s)
    :=\sum_{m=0}^{\infty}
    (-1)^m\binom{s_m+m-1}{m}
    \frac{B_s}{B_{s_m}}\,
    \kappa_\s^m\I_{s_m}.
\end{equation}
We now derive this operator series independently from the local recursion.

\subsection{Evaluation of the local recursion}

The recursion is simplest to evaluate in the bilinear normalisation. We therefore take the seed and generating fields to be
\begin{equation}
    J_{s+1}\to J_{s+1}^{\mathrm{bil}},
    \qquad W \to J_{\s+1}^{\mathrm{bil}}.
\end{equation}
The bilinear currents obey the OPE identities, what is identified as a pre-lie algebra in \cite{Dijkgraaf:1996iy}
\begin{equation}\label{eq: prelie algebra}
\begin{aligned}
    \bigl(J_{2n}^{\mathrm{bil}}J_{2m}^{\mathrm{bil}}\bigr)_1
    &=0\pmod{\del},\\
    \del^{-1}\bigl(J_{2n}^{\mathrm{bil}}J_{2m}^{\mathrm{bil}}\bigr)_1
    &=(2n-1)J_{2n+2m-2}^{\mathrm{bil}}
    \pmod{\del},\\
    \bigl(J_{2n}^{\mathrm{bil}}J_{2m}^{\mathrm{bil}}\bigr)_2
    &=(2n+2m-2)J_{2n+2m-2}^{\mathrm{bil}}
    \pmod{\del}.
\end{aligned}
\end{equation}
Thus the fields appearing in the recursion remain bilinear modulo total derivatives. Applying the second-order-pole recursion \eqref{eq: W recursion} gives
\begin{equation}
    [W^n]
    =
    \frac{2^{n-1}(\s n)!}
    {n\bigl(n(\s-1)+1\bigr)!}
    J_{n(\s-1)+2}^{\mathrm{bil}}
    \pmod{\del},
    \qquad n\geq1.
\end{equation}
Similarly, the mixed recursion \eqref{eq: Msn recursion} yields
\begin{equation}
    M_{s,n}
    =
    \frac{s\,2^{n-1}(s+\s n-1)!}
    {n\bigl(s+n(\s-1)\bigr)!}
    J_{s+n(\s-1)+1}^{\mathrm{bil}}
    \pmod{\del},
    \qquad n\geq1.
\end{equation}
At $s=1$, these expressions agree, as required by $M_{1,n}=[W^n]\pmod{\del}$.

Substituting into the definition of the deformed charges and including the undeformed term, we find
\begin{equation}
    \dkdvI_s^{(\s),\mathrm{bil}}
    =\sum_{m=0}^{\infty}
    \frac{s}{s_m}\binom{s_m+m-1}{m}
    \left(\frac{\alpha_\s^{\mathrm{bil}}\t^\s}{2\pi i}\right)^m
    \I_{s_m}^{\mathrm{bil}}.
\end{equation}
Here $\alpha_\s^{\mathrm{bil}}$ is the chemical potential conjugate to $\I_\s^{\mathrm{bil}}$. To compare with~\eqref{eq: Deformed KdV Charge Free Fermion TBA.}, we must convert both the charges and the chemical potential to BLZ normalisation.

\subsection*{Matching the normalisations}

In the bilinear convention, the undeformed ground-state eigenvalues are
\begin{equation}
    I_{s,\mathrm{bil}}^\pm
    =\begin{cases}
        -\frac12(2^{-s}-1)\zeta(-s),&\text{Neveu--Schwarz},\\
        -\frac12\zeta(-s),&\text{Ramond}.
    \end{cases}
\end{equation}
Comparing these with the BLZ eigenvalues determines the conversion factor:
\begin{equation}\label{eq:relation bil to BLZ}
    \I_s^{\mathrm{bil}}=N_s\I_s,
    \qquad I_{s,\mathrm{bil}}^\pm=N_sI_s^\pm,
\end{equation}
where
\begin{equation}
\begin{aligned}
    N_s
    &=\frac{s\,3^{\frac{s+1}{2}}\sqrt{\pi}\,\Gamma(2s)}
    {2^{2s}\Gamma\!\left(\frac{s+3}{2}\right)
    \Gamma\!\left(\frac{3s+2}{2}\right)}\\
    &=\frac{(-1)^{\frac{s+1}{2}}s}{(2\pi)^{s+1}B_s}.
\end{aligned}
\end{equation}
The chemical potentials must be rescaled so that the GGE insertion is unchanged:
\begin{equation}
    \alpha_\s\I_\s
    =\alpha_\s^{\mathrm{bil}}\I_\s^{\mathrm{bil}},
    \qquad \alpha_\s^{\mathrm{bil}}=\frac{\alpha_\s}{N_\s}.
\end{equation}
Dividing the bilinear deformed charge by $N_s$ gives a charge whose undeformed term is $\I_s$:
\begin{equation}
    \dkdvI_s^{(\s)}
    =\sum_{m=0}^{\infty}
    \frac{s}{s_m}\binom{s_m+m-1}{m}
    \frac{N_{s_m}}{N_s}
    \left(\frac{\alpha_\s\t^\s}{2\pi iN_\s}\right)^m
    \I_{s_m}.
\end{equation}

To identify $\kappa_\s$, compare the first-order terms for $s=1$. Since $N_1=1$, the local recursion gives
\begin{equation}
    \dkdvI_1^{(\s)}
    =\I_1+\frac{\alpha_\s\t^\s}{2\pi i}\I_\s
    +O(\alpha_\s^2),
\end{equation}
whereas the series suggested by the TBA gives
\begin{equation}
    \I_1^{(\s)}(\kappa_\s)
    =\I_1-\s\frac{B_1}{B_\s}\kappa_\s\I_\s
    +O(\kappa_\s^2).
\end{equation}
Using $B_1=-1/(4\pi^2)$, equality of the first-order terms fixes
\begin{equation}
    \kappa_\s
    =-\frac{2\pi iB_\s}{\s}\alpha_\s\t^\s.
\end{equation}
With this identification, the normalisation factors satisfy
\begin{equation}
    \frac{N_{s_m}}{N_s}
    \left(\frac{\alpha_\s\t^\s}{2\pi iN_\s}\right)^m
    =(-1)^m\frac{s_m}{s}\frac{B_s}{B_{s_m}}\kappa_\s^m.
\end{equation}
The factors $s/s_m$ cancel, leaving
\begin{equation}
    \dkdvI_s^{(\s)}
    =\sum_{m=0}^{\infty}
    (-1)^m\binom{s_m+m-1}{m}
    \frac{B_s}{B_{s_m}}\,
    \kappa_\s^m\I_{s_m}
    =\I_s^{(\s)}(\kappa_\s).
\end{equation}
Thus the local recursion reproduces the formal operator series~\eqref{eq: Deformed KdV Charge Free Fermion TBA.} coefficient by coefficient. In particular, its ground-state eigenvalues agree with the TBA expansion~\eqref{eq: Is FF TBA (kappa sigma)} in both sectors.

This expression also makes commutativity immediate at $c=1/2$: every deformed charge is a formal linear combination of the mutually commuting undeformed KdV charges. Hence
\begin{equation}
    [\dkdvI_s^{(\s)},\dkdvI_r^{(\s)}]=0.
\end{equation}
The same argument applies even when the charges are constructed using different generating fields, and when we choose different chemical potentials
\begin{equation}
    [\dkdvI_s^{(\s)}(\alpha_\s),\dkdvI_r^{(\rho)}(\alpha_\rho)]=0
\end{equation}
for arbitrary positive odd $s,r,\s,\rho$, as an identity of formal series in the respective deformation parameters. This is a special feature of $c=1/2$. The same property holds at $c=-2$ \cite{Karimi:2026tvh}, since the KdV fields satisfy the same relations \eqref{eq: prelie algebra}, and the TBA is again trivial.

\section{The $\sigma$-Deformed KdV Hierarchy in the Lee-Yang Model}\label{sec: The Deformed Hierarchy in LY}

The free fermion model provides an explicit check of the construction, but its bilinear realisation simplifies both the deformed charges and the TBA. We now turn to the Lee-Yang minimal model, where the TBA contains a non-trivial integral term. Our aim is to test whether the higher deformed charges obtained from the local recursion have the same TBA interpretation as the deformed Hamiltonian. We compare the two descriptions through cubic order in the deformation parameter, using the ground state and three low-lying states.

\subsection{Setup}

The Lee-Yang model $M(2,5)$ has central charge $c=-22/5$ and two Virasoro primary fields, of weights $h=0$ and $h=-1/5$. The spin-$3$ KdV charge is null in this theory, so it does not generate a non-trivial deformation. We therefore consider the GGE with a spin-$5$ insertion,
\begin{equation}
    \Tr\!\left(q^{L_0-\frac{c}{24}}e^{\alpha\I_5}\right),
    \qquad \alpha:=\alpha_5,
\end{equation}
and take $W=J_6$ as the generating field. We retain the charge normalisation of~\eqref{eq: dkdv definition} and write
\begin{equation}
    \gamma=\frac{\alpha\t^5}{2\pi i}.
\end{equation}
For $s=1,5,7$, the local recursion gives
\begin{equation}
    \dkdvI_s^{(5)}
    =\I_s+\sum_{m=1}^{3}b_m\gamma^m\J_{s,s+4m}
    +O(\gamma^4),
    \qquad b_m=\frac{1}{2^{m-1}(m-1)!}.
\end{equation}
Here $\J_{s,s+4m}$ denotes the cylinder zero-mode of the corresponding correction field. We evaluate these charges on the following states in the $h=-1/5$ module:
\begin{equation}
    \ket{h},\qquad L_{-1}\ket{h},\qquad
    L_{-2}\ket{h},\qquad L_{-3}\ket{h},
    \qquad h=-\frac15.
\end{equation}
We label them by $\{0\},\{1\},\{2\},\{3\}$, respectively. After quotienting by null states, each of these levels is one-dimensional, so there are no other states to worry about. Denoting the eigenvalue of the deformed charge by $\mathcal I_s^{(5)}\{i\}$, we have
\begin{equation}
    \mathcal I_s^{(5)}\{i\}
    =I_s\{i\}+\sum_{m=1}^{3}b_m\gamma^m J_{s,s+4m}\{i\}
    +O(\gamma^4),
    \qquad i=0,1,2,3.
\end{equation}
We first compare ratios of ground-state to excited-state coefficients:
\begin{equation}
    R_s^{(i)}:=\frac{I_s\{0\}}{I_s\{i\}},
    \qquad
    R_{s,s+4m}^{(i)}:=\frac{J_{s,s+4m}\{0\}}{J_{s,s+4m}\{i\}},
    \qquad i,m=1,2,3.
\end{equation}
These ratios are independent of the overall normalisation of each component charge and of the conversion between deformation parameters. Once the ratios agree, we fix that conversion and compare the raw coefficients (not ratios) in the case of the ground state. The ground-state TBA equation is
\begin{equation}
    \epsilon(\theta)
    =e^\theta+\kappa e^{5\theta}
    -\int_\R\varphi(\theta-\theta')
    \log\bigl(1+e^{-\epsilon(\theta')}\bigr)
    \frac{d\theta'}{2\pi},
\end{equation}
where
\begin{equation}
    S(\theta)=\frac{\sinh\theta+i\sin(\pi/3)}
    {\sinh\theta-i\sin(\pi/3)},
    \qquad
    \varphi(\theta)=-i\,\partial_\theta\log S(\theta)
    =-\frac{4\sqrt3\cosh\theta}{1+2\cosh(2\theta)}.
\end{equation}
This equation was obtained by performing a mirror transformation on the TBA equation in section 5 of \cite{Downing:2024nfb}. At $\kappa=0$, its solution determines the undeformed ground-state eigenvalues through
\begin{equation}
    I_{2n-1}\{0\}
    =K_n\int_\R e^{(2n-1)\theta}
    \log\bigl(1+e^{-\epsilon(\theta)}\bigr)
    \frac{d\theta}{2\pi}.
\end{equation}
The normalisation constants follow from~\cite{Bazhanov:1996aq}:
\begin{equation}
    K_n=\frac{3\,5^{n-1}n!}{2^{3n-3}\pi^n}
    \frac{\Gamma\!\left(\frac{2n+2}{3}\right)}
    {\Gamma\!\left(\frac{10n-5}{6}\right)}
    \left[\frac{\Gamma(2/3)}{\Gamma(1/6)}\right]^{2n-1}
    \sin\!\left(\frac{2\pi(n+1)}{3}\right).
\end{equation}
We test the proposal that the same integral, evaluated with the deformed pseudoenergy, gives the eigenvalues of the deformed charges:
\begin{equation}
    \mathcal I_{2n-1}^{(5),\mathrm{TBA}}\{0\}
    :=K_n\int_\R e^{(2n-1)\theta}
    \log\bigl(1+e^{-\epsilon(\theta)}\bigr)
    \frac{d\theta}{2\pi},
    \qquad \kappa\geq0,
\end{equation}
Where we emphasise that the expression above is dependant on $\kappa$. For the Hamiltonian, this relation was studied in~\cite{Downing:2024nfb}, where the TBA energies were compared with the Hamiltonian obtained from the modular transformation. Here we extend the comparison to the higher charges.

For an excited state, additional source terms are specified by complex rapidities $\vartheta_a$ and their conjugates, which constitute poles in the integrand of the convolution term. The TBA equation becomes\footnote{This equation isn't completely true. In an upcoming work, \cite{DowningAshokTateo}, they show that the ansatz needs to be modified as $\kappa$ increases to accommodate for the fact that the poles no longer come in complex conjugate pairs $\bar\theta_a \neq\theta_a^*$. However, in the asymptotic regime that we're considering which allows us to expand around $\kappa=0$, they explain that what we have suffices.}
\begin{equation}
\begin{aligned}
    \epsilon(\theta)
    ={}&e^\theta+\kappa e^{5\theta}
    +\sum_{a=1}^{N}\log\!\left(
    \frac{S(\theta-\vartheta_a)}{S(\theta-\bar\vartheta_a)}
    \right)-\int_\R\varphi(\theta-\theta')
    \log\bigl(1+e^{-\epsilon(\theta')}\bigr)
    \frac{d\theta'}{2\pi}.
\end{aligned}
\end{equation}
Given $\ell_a\in\N$, the rapidities satisfy
\begin{equation}
\begin{aligned}
    2\ell_a\pi i
    =e^{\vartheta_a}+\kappa e^{5\vartheta_a}
    -\log S(\vartheta_a-\bar\vartheta_a)&+\sum_{\substack{b=1\\b\neq a}}^{N}
    \log\!\left(
    \frac{S(\vartheta_a-\vartheta_b)}
    {S(\vartheta_a-\bar\vartheta_b)}\right)\\&-\int_\R\varphi(\vartheta_a-\theta')
    \log\bigl(1+e^{-\epsilon(\theta')}\bigr)
    \frac{d\theta'}{2\pi},
\end{aligned}
\end{equation}
The corresponding proposed eigenvalues are, with $\kappa\geq0$,
\begin{equation}
\begin{aligned}
    \mathcal I_{2n-1}^{(5),\mathrm{TBA}}
    :=K_n\Bigg[&\frac{i}{2n-1}\sum_{a=1}^{N}
    \left(e^{(2n-1)\vartheta_a}-e^{(2n-1)\bar\vartheta_a}\right)+\int_\R e^{(2n-1)\theta}
    \log\bigl(1+e^{-\epsilon(\theta)}\bigr)
    \frac{d\theta}{2\pi}\Bigg].
\end{aligned}
\end{equation}
The three excited states considered in this work require only one pair of rapidities, which we denote by $\eta$ and $\bar\eta$. Following~\cite{Downing:2024nfb}, we expand
\begin{equation}\label{eq: TBA parameter expansion}
    \epsilon(\theta)=\sum_{m=0}^{\infty}\epsilon_m(\theta)\kappa^m,
    \qquad
    \eta=\sum_{m=0}^{\infty}\eta_m\kappa^m.
\end{equation}
The resulting charge eigenvalues have expansions
\begin{equation}
    \mathcal I_s^{(5),\mathrm{TBA}}\{i\}
    =\SCE_s\{i\}
    +\sum_{m=1}^{3}\kappa^m\SCE_{s,s+4m}\{i\}
    +O(\kappa^4).
\end{equation}
We define the TBA ratios by
\begin{equation}
    R_s^{\mathrm{TBA},(i)}
    :=\frac{\SCE_s\{0\}}{\SCE_s\{i\}},
    \qquad
    R_{s,s+4m}^{\mathrm{TBA},(i)}
    :=\frac{\SCE_{s,s+4m}\{0\}}{\SCE_{s,s+4m}\{i\}},
    \qquad i,m=1,2,3.
\end{equation}
We would first like to test that these are the same charges up to proportionality. This would be strongly evidenced if the ratios between the eigenvalues of the ground state with the excited states matched.
\begin{equation}
    R_s^{(i)}=R_s^{\mathrm{TBA},(i)},
    \qquad
    R_{s,s+4m}^{(i)}=R_{s,s+4m}^{\mathrm{TBA},(i)},
\end{equation}
for $s=1,5,7$ and $i,m=1,2,3$.

\subsection{Ratios from the local recursion}

The three charges to be compared are
\begin{equation}
\begin{aligned}
    \dkdvI_1^{(5)}
    &=\I_1+b_1\gamma\J_{1,5}
    +b_2\gamma^2\J_{1,9}+b_3\gamma^3\J_{1,13}
    +O(\gamma^4),\\
    \dkdvI_5^{(5)}
    &=\I_5+b_1\gamma\J_{5,9}
    +b_2\gamma^2\J_{5,13}+b_3\gamma^3\J_{5,17}
    +O(\gamma^4),\\
    \dkdvI_7^{(5)}
    &=\I_7+b_1\gamma\J_{7,11}
    +b_2\gamma^2\J_{7,15}+b_3\gamma^3\J_{7,19}
    +O(\gamma^4).
\end{aligned}
\end{equation}
Their component charges are the zero-modes
\begin{equation}
    \I_s=\sfQ{J_{s+1}},
    \qquad
    \J_{s,s+4m}=\sfQ{Y_{s,s+4m+1}},
\end{equation}
where $Y_{s,s+4m+1}$ is a quasi-primary representative of $M_{s,m}$. The fields and their low-level eigenvalues are given in Appendix~\ref{sec: Appendix with Computation}. Taking the ratios of those eigenvalues gives Table~\ref{tab:lee-yang-exact-ratios}. The repeated ratios are consistent with the proportionalities between component fields established in Proposition~\ref{prop: relation M fields}.

\begin{table}[h]
\centering
\normalsize
\renewcommand{\arraystretch}{1.3}
\begin{tabular*}{\textwidth}{@{\extracolsep{\fill}}c|ccc@{}}
\hline
Ratio & $\{1\}$ & $\{2\}$ & $\{3\}$ \\
\hline
$R_{1}^{(i)}$
& $-\tfrac{1}{59}$
& $-\tfrac{1}{119}$
& $-\tfrac{1}{179}$ \\

$R_{1,5}^{(i)}$
& $-\tfrac{89}{59131}$
& $-\tfrac{89}{1630351}$
& $-\tfrac{89}{12273571}$ \\

$R_{1,9}^{(i)}$
& $-\tfrac{1}{299}$
& $-\tfrac{1}{144599}$
& $-\tfrac{1}{5472899}$ \\

$R_{1,13}^{(i)}$
& $-\tfrac{1077983}{26025277}$
& $-\tfrac{1077983}{224811626137}$
& $-\tfrac{1077983}{42908986708597}$ \\
\hline
$R_{5}^{(i)}$
& $-\tfrac{89}{59131}$
& $-\tfrac{89}{1630351}$
& $-\tfrac{89}{12273571}$ \\

$R_{5,9}^{(i)}$
& $-\tfrac{1}{299}$
& $-\tfrac{1}{144599}$
& $-\tfrac{1}{5472899}$ \\

$R_{5,13}^{(i)}$
& $-\tfrac{1077983}{26025277}$
& $-\tfrac{1077983}{224811626137}$
& $-\tfrac{1077983}{42908986708597}$ \\

$R_{5,17}^{(i)}$
& $\tfrac{270993486853}{659028256633}$
& $-\tfrac{270993486853}{24735006743930387}$
& $-\tfrac{270993486853}{23766205518652130207}$ \\
\hline
$R_{7}^{(i)}$
& $\tfrac{211}{122371}$
& $\tfrac{211}{14154931}$
& $\tfrac{211}{238657891}$ \\

$R_{7,11}^{(i)}$
& $\tfrac{66002197}{8852004877}$
& $\tfrac{66002197}{13653213561157}$
& $\tfrac{66002197}{1160345004231037}$ \\

$R_{7,15}^{(i)}$
& $\tfrac{8045857}{81640897}$
& $\tfrac{8045857}{1266270627937}$
& $\tfrac{8045857}{541402704446977}$ \\

$R_{7,19}^{(i)}$
& $\tfrac{17117596141801}{51366264181201}$
& $\tfrac{17117596141801}{724785304506612601}$
& $\tfrac{17117596141801}{1560590830901671476001}$ \\
\hline
\end{tabular*}
\caption{Exact ratios of ground-state to excited-state coefficients for the spin-$5$ deformation. Columns $\{i\}$ refer to $L_{-i}\ket{-1/5}$, $i=1,2,3$.}
\label{tab:lee-yang-exact-ratios}
\end{table}

\subsection{Ratios from the TBA}

For the excited states under consideration, the TBA equations reduce to
\begin{equation}
\begin{aligned}
    \epsilon(\theta)
    ={}&e^\theta+\kappa e^{5\theta}
    +\log\!\left(\frac{S(\theta-\eta)}{S(\theta-\bar\eta)}\right)-\int_\R\varphi(\theta-\theta')
    \log\bigl(1+e^{-\epsilon(\theta')}\bigr)
    \frac{d\theta'}{2\pi},\\
    2\ell\pi i
    ={}&e^\eta+\kappa e^{5\eta}
    -\log S(2i \Im(\eta))-\int_\R\varphi(\eta-\theta')
    \log\bigl(1+e^{-\epsilon(\theta')}\bigr)
    \frac{d\theta'}{2\pi},
    \qquad \ell\in\N.
\end{aligned}
\end{equation}
We solve for $\epsilon_m(\theta)$ and $\eta_m$, \eqref{eq: TBA parameter expansion},through $m=3$ and substitute them into the one-pole TBA expression
\begin{equation}
\begin{aligned}
    \mathcal I_{2n-1}^{(5),\mathrm{TBA}}
    =K_n\Bigg[&\frac{i}{2n-1}
    \left(e^{(2n-1)\eta}-e^{(2n-1)\bar\eta}\right)+\int_\R e^{(2n-1)\theta}
    \log\bigl(1+e^{-\epsilon(\theta)}\bigr)
    \frac{d\theta}{2\pi}\Bigg].
\end{aligned}
\end{equation}
For the ground state, the terms involving $\eta$ are omitted. The equations for the expansion coefficients are collected in Appendix~\ref{sec: Appendix TBA}.

The numerical ratios are shown in Table~\ref{tab:lee-yang-tba-ratios} and agree with the exact ratios in Table~\ref{tab:lee-yang-exact-ratios}. The displayed values were first computed by the author in Mathematica. They were subsequently checked using ChatGPT, both with the author's method and with an alternative numerical procedure. The comparisons with the exact ratios gave relative differences of order $10^{-14}$.

\begin{table}[h]
\centering
\resizebox{\textwidth}{!}{%
\begin{tabular}{c|rrr}
\hline
Ratio & $\{1\}$ & $\{2\}$ & $\{3\}$ \\
\hline
$R_{1}^{\mathrm{TBA},(i)}$
& $-1.6949152542372(01)\times10^{-2}$
& $-8.403361344537(38)\times10^{-3}$
& $-5.586592178770(66)\times10^{-3}$ \\

$R_{1,5}^{\mathrm{TBA},(i)}$
& $-1.5051326715259(70)\times10^{-3}$
& $-5.45894718376(60)\times10^{-5}$
& $-7.251353334738(68)\times10^{-6}$ \\

$R_{1,9}^{\mathrm{TBA},(i)}$
& $-3.344481605351(26)\times10^{-3}$
& $-6.91567714852(81)\times10^{-6}$
& $-1.8271851901524(66)\times10^{-7}$ \\

$R_{1,13}^{\mathrm{TBA},(i)}$
& $-4.142061581131(43)\times10^{-2}$
& $-4.795050053786(38)\times10^{-6}$
& $-2.5122546177116(97)\times10^{-8}$ \\
\hline
$R_{5}^{\mathrm{TBA},(i)}$
& $-1.5051326715259(70)\times10^{-3}$
& $-5.45894718376(60)\times10^{-5}$
& $-7.251353334738(68)\times10^{-6}$ \\

$R_{5,9}^{\mathrm{TBA},(i)}$
& $-3.344481605351(26)\times10^{-3}$
& $-6.91567714852(81)\times10^{-6}$
& $-1.8271851901524(66)\times10^{-7}$ \\

$R_{5,13}^{\mathrm{TBA},(i)}$
& $-4.142061581131(43)\times10^{-2}$
& $-4.795050053786(38)\times10^{-6}$
& $-2.5122546177116(97)\times10^{-8}$ \\

$R_{5,17}^{\mathrm{TBA},(i)}$
& $4.1120162015133(22)\times10^{-1}$
& $-1.095586872720(40)\times10^{-5}$
& $-1.1402471742505(39)\times10^{-8}$ \\
\hline
$R_{7}^{\mathrm{TBA},(i)}$
& $1.72426473592(60)\times10^{-3}$
& $1.490646616362(91)\times10^{-5}$
& $8.841107206465(88)\times10^{-7}$ \\

$R_{7,11}^{\mathrm{TBA},(i)}$
& $7.45618624448(50)\times10^{-3}$
& $4.834187695399(12)\times10^{-6}$
& $5.688152813114(52)\times10^{-8}$ \\

$R_{7,15}^{\mathrm{TBA},(i)}$
& $9.855179567662(19)\times10^{-2}$
& $6.353979017193(55)\times10^{-6}$
& $1.4861131896669(68)\times10^{-8}$ \\

$R_{7,19}^{\mathrm{TBA},(i)}$
& $3.33245884524841(86)\times10^{-1}$
& $2.3617471319253(64)\times10^{-5}$
& $1.0968663792488(96)\times10^{-8}$ \\
\hline
\end{tabular}}
\caption{Numerical TBA ratios for the spin-$5$ deformation, with the same states and ordering as Table~\ref{tab:lee-yang-exact-ratios}. Brackets mark trailing digits of the numerical results, retained to show discrepancies with the exact ratios; they do not denote uncertainty estimates.}
\label{tab:lee-yang-tba-ratios}
\end{table}

\subsection{Matching the chemical potential}

The ratio comparison removes the overall normalisation at each order. To compare the eigenvalues themselves, we must also identify the relation between $\gamma$ and $\kappa$. Matching the first-order terms in the Hamiltonian gives
\begin{equation}
    \gamma=-5\frac{K_1}{K_3}\kappa
    =\frac{1729\pi^2}{2025\sqrt3}
    \left[\frac{\Gamma(1/6)}{\Gamma(2/3)}\right]^6\kappa.
\end{equation}
More generally, first-order matching for a spin-$\s$ insertion gives
\begin{equation}
    \kappa_\s
    =-\frac{K_{(\s+1)/2}}{\s K_1}\,\gamma
    =\frac{2\pi}{\s}K_{(\s+1)/2}\,\gamma,
    \qquad K_1=-\frac{1}{2\pi}.
\end{equation}
For the spin-$5$ deformation, substituting this relation into the local expansion gives
\begin{equation}
    \dkdvI_s^{(5)}
    =\I_s+\sum_{m=1}^{3}
    b_m\left(-5\frac{K_1}{K_3}\right)^m
    \kappa^m\J_{s,s+4m}
    +O(\kappa^4).
\end{equation}
The ground-state coefficients must therefore satisfy
\begin{equation}\label{eq: Need to satisfy, SCE= J}
\begin{aligned}
    \SCE_s\{0\}&=I_s\{0\},\\
    \SCE_{s,s+4m}\{0\}
    &=b_m\left(-5\frac{K_1}{K_3}\right)^m
    J_{s,s+4m}\{0\},
    \qquad m=1,2,3.
\end{aligned}
\end{equation}
Table~\ref{tab:lee-yang-tba-ground-state-coefficients} lists the coefficients obtained from the TBA. Comparing them with the exact eigenvalues in Table~\ref{tab:lee-yang-low-level-charge-eigenvalues} of Appendix~\ref{sec: Appendix with Computation} confirms~\eqref{eq: Need to satisfy, SCE= J} to high numerical accuracy.

\begin{table}[h]
\centering
\renewcommand{\arraystretch}{1.3}
\resizebox{\textwidth}{!}{%
\begin{tabular}{cr||cr||cr}
\hline
Coefficient & Numerical value
& Coefficient & Numerical value
& Coefficient & Numerical value \\
\hline
$\SCE_{1}\{0\}$ & $-1.666666666666(58)\times10^{-2}$
& $\SCE_{5}\{0\}$ & $1.177248677248(70)\times10^{-4}$
& $\SCE_{7}\{0\}$ & $-2.6049382716049(99)\times10^{-5}$ \\

$\SCE_{1,5}\{0\}$ & $2.76341105920(90)\times10^{0}$
& $\SCE_{5,9}\{0\}$ & $-1.8789642724498(45)\times10^{0}$
& $\SCE_{7,11}\{0\}$ & $1.4861823744250(51)\times10^{0}$ \\

$\SCE_{1,9}\{0\}$ & $-4.410581001867(35)\times10^{4}$
& $\SCE_{5,13}\{0\}$ & $2.29156127826(10)\times10^{5}$
& $\SCE_{7,15}\{0\}$ & $-3.935960834946(96)\times10^{5}$ \\

$\SCE_{1,13}\{0\}$ & $5.37908931356(90)\times10^{9}$
& $\SCE_{5,17}\{0\}$ & $-1.0732110885526(49)\times10^{11}$
& $\SCE_{7,19}\{0\}$ & $3.221077635354(54)\times10^{11}$ \\
\hline
\end{tabular}}
\caption{Ground-state coefficients in the TBA expansion through cubic order in $\kappa$ for the spin-$5$ deformation. Brackets mark numerical digits where they begin to diverge from the exact answer, as in Table~\ref{tab:lee-yang-tba-ratios}.}
\label{tab:lee-yang-tba-ground-state-coefficients}
\end{table}

Together with the ratio comparison, this establishes numerical agreement between the local recursion and the proposed TBA eigenvalues for $s=1,5,7$, through cubic order in the spin-$5$ deformation, on all four states considered. These results support the TBA interpretation of the higher deformed charges in a setting where the component charges within a given deformed charge need not commute individually.

\section{Conclusion and Outlook}

Let us review the content of this work. We have constructed the $\sigma$-deformed KdV hierarchy, an infinite family of mutually commuting charges motivated by the modular transformation of a GGE with a spin-$\sigma$ KdV charge inserted. The charges are defined as formal power series in the chemical potential, with coefficients given by local integrals of motion. The first member reproduces the deformed Hamiltonian appearing in the asymptotic modular transform, while the higher members extend this construction to the remaining KdV charges.

The hierarchy is specified by local recursion relations, which can also be expressed as flow equations. These provide two complementary proofs of commutativity to all orders and establish the existence of the local primitives required by the construction. For a fixed generating series, the deformed currents are uniquely determined modulo total derivatives. Although the individual charges appearing as coefficients need not commute, their contributions combine into a commuting family of formal series.

We tested the TBA interpretation of these charges in the free fermion and Lee-Yang models. In the free fermion theory, the hierarchy reduces to explicit linear combinations of undeformed KdV charges, and its ground-state eigenvalues in both the Ramond and Neveu-Schwarz sectors agree with the TBA expansion to all orders. In the Lee-Yang model, we found numerical agreement through cubic order in the spin-$5$ deformation for the spin-$1$, spin-$5$, and spin-$7$ charges on the ground state and three low-lying excited states. These results support the interpretation of the higher deformed charges through the TBA in the setting of an non-trivial interacting theory.

Several extensions suggest themselves. Although we have focused on the KdV hierarchy, the construction is formulated in terms of OPEs and the local commutativity of the undeformed seed currents. It is therefore natural to investigate whether the same approach applies to other integrable hierarchies, such as the Boussinesq and Bullough--Dodd hierarchies. A particularly accessible example is the $W_{1+\infty}$ hierarchy at $c=-2$ discussed in~\cite{Karimi:2026tvh}. Its similar pre-Lie structure to the free fermion case shows that the corresponding recursion will also admit an explicit solution.

Furthermore, there is no reason to restrict ourselves to one charge. GGEs with several inserted charges may lead to multi-parameter versions of the flows in Proposition \ref{prop: ao flows}. 

For the free fermion, one can ask whether the deformed CFT charges extend to the tensor currents of the massive theory studied in \cite{Downing:2023uuc}. The massive currents in that paper reduce to the KdV charges in the massless limit.

It would be interesting to conduct an investigation of the deformed charges through non-linear integral equations of Destri-de Vega type \cite{Bazhanov:1996dr} and through the ODE/IM correspondence. The former direction connects with existing literature on NLIE GGEs in the sine-gordon model \cite{Hegedus:2025yhr}. The latter direction connects with existing interest in GGEs and ODE/IM~\cite{Masoero:2010is,Brizio:2025fms}. 

For the ODE/IM correspondence, the free fermion model would be very accessible, since in that case the deformation reorganises the undeformed KdV charges into a series in the chemical potential. This suggests investigating whether the same mixing can be realised on the schrodinger equation side by a coupling-dependent change of spectral parameter, $E\mapsto f(E,\kappa_\s)$. This would work to reshuffle the charges appearing in the asymptotic expansion of the $Q$-function. One expects a similar story for the symplectic fermion model, although the function $f$ would be different since at $c=-2$ the potential is the $x^2$ harmonic oscillator, whereas at $c=1/2$ the potential is $x^{2/3}$. Determining such functions and matching the resulting expansion would provide a concrete first test. 

In a companion paper, released alongside the present work, we investigate the Lee-Yang GGE in the context of the ODE/IM correspondence~\cite{Ashok:2026mfy}, building in part on the results established here. This case is not as simple as the above proposition for the free fermion, since the deformed charges are not made out of sums of undeformed KdV charges. One needs to change the spectral data by adding terms linear in the ODE coordinate to the potential. They found that the period integrals of the deformed ODE give exactly the deformed KdV charge eigenvalues for both the $h=-1/5$ and $h=0$ primary states. A further direction to pursue in connection with this, and the present work, is to construct deformed hierarchies associated to more general integrable hierarchies, like those associated to $W$-algebras and calculate periods of the higher rank ODEs like in \cite{Kudrna:2025bzg}.

These approaches also motivate a functional formulation of the GGE, potentially through a modification of the BLZ construction~\cite{Bazhanov:1994ft,Bazhanov:1996dr}. Indeed it is already known that at least the Y-functions of GGEs satisfy the same functional relations as in the undeformed case \cite{Castro-Alvaredo:2022pgz}. The aim would be to identify functional relations whose asymptotic expansions generate the deformed charges. Such a formulation could connect the local recursion to the spectral descriptions supplied by the TBA, non-linear integral equations, and ODE/IM.

An unresolved question is the non-perturbative completion of the hierarchy. The present construction determines formal expansions around vanishing chemical potential, whereas the full modular transform can contain additional contributions that are invisible to these expansions. The non-perturbative TBA solutions discussed in~\cite{Downing:2024nfb} offer a starting point, and there is ongoing work looking to understand the full structure of these non-perturbative solutions in more detail \cite{DowningAshokTateo}. Based on the analysis in these works, and the free cases where they were worked out in detail \cite{Downing:2021mfw,Downing:2023lop} in general we expect that the spectrum of the hierarchy actually becomes complex. Understanding how they contribute to the higher charges, and whether the resulting completed charges retain the commutativity established perturbatively here, would clarify the relation between the formal hierarchy and the full modular-transformed GGE. 

We said before that the GGE and its modular transformation can be interpreted as a translation invariant defect. A further question is whether this defect and its higher-spin charges coincide with those studied in \cite{Ambrosino:2025myh}.

Finally, a microscopic description in a statistical model could give a more direct interpretation of both the modular transformation and the deformed charges. It may also clarify the relation between the OPE flow equations and the boost deformations studied in spin chains~\cite{Bargheer:2008jt}. A lattice formulation would provide a useful setting in which to develop the proposed functional approach, with the free theories at $c=1/2$ and $c=-2$ as natural starting points. The latter is particularly relevant in view of previous work on $c=-2$ GGEs and integrals of motion in statistical physics~\cite{Baseilhac:2025upc, Nigro:2009si}.

\section*{Acknowledgements}
FK thanks M. Downing, G. M. T. Watts, S. Ashok, and R. Tateo for helpful discussions and for reading early drafts of this paper. FK is also grateful to S. Brian and S. Bartlett-Tisdall for their useful scrutiny of the manuscript.
FK would like to acknowledge the funding from the STFC grant ST/Y509279/1 in supporting this work.

\section*{Statement on use of Artificial Intelligence}
The results that were partially obtained by AI, and the extent to which AI was used to obtain them have been clearly marked as such by the author. Conversation with AI took place on OpenAI's ChatGPT 6 Astra. 

\appendix

\section{Fields, Charges, and Eigenvalues in the Lee-Yang Model}\label{sec: Appendix with Computation}

This appendix lists the local fields and charge eigenvalues used in Section~\ref{sec: The Deformed Hierarchy in LY}. All expressions are evaluated in the irreducible Lee-Yang modules, with the null relations imposed. A basis of the irreducible vacuum module is given in~\cite{Bajnok:2014fca} by
\begin{equation}
    \CH_0=\operatorname{span}\bigl\{
    L_{-n_1}\cdots L_{-n_m}\ket{0}
    \;\big|\;
    m\geq0,\quad n_m\geq2,\quad n_i-n_{i+1}\geq2
    \bigr\}.
\end{equation}
The empty product at $m=0$ denotes the vacuum; the restrictions on the indices apply when they are present. We choose quasi-primary representatives of the KdV currents. After imposing the null relations, the currents needed here take the following form in BLZ normalisation:
\begin{equation}
\begin{aligned}
    J_2&=T,\\
    J_4&=0,\\
    J_6&=\frac1{10}
    \left[(\partial^2T\,T)-\frac1{42}\partial^4T\right],\\
    J_8&=-\frac7{4875}
    \left[(\partial^4T\,T)
    +14(\partial^3T\,\partial T)
    -\frac7{90}\partial^6T\right].
\end{aligned}
\end{equation}
For the spin-$5$ deformation, the generating field is $W=J_6$, and the seed fields for the charges studied in the main text are $J_2$, $J_6$, and $J_8$. To express the fields compactly, associate a field $\mathcal B_{a_1,\ldots,a_k}$ with each basis state
\begin{equation}
    \ket{\mathcal B_{a_1,\ldots,a_k}}
    :=L_{-a_1}\cdots L_{-a_k}\ket{0},
    \qquad a_k\geq2,\quad a_i-a_{i+1}\geq2.
\end{equation}
The corresponding fields are defined recursively by
\begin{equation}
\begin{aligned}
    \mathcal B_a
    &=\frac{\partial^{a-2}T}{(a-2)!},\\
    \mathcal B_{a_1,\ldots,a_k}
    &=\frac1{(a_1-2)!}
    \bigl(\partial^{a_1-2}T\,
    \mathcal B_{a_2,\ldots,a_k}\bigr),
    \qquad k\geq2.
\end{aligned}
\end{equation}
Products in brackets denote conformal normal ordering. Applying the local recursion in this basis gives the quasi-primary correction fields listed in~\eqref{tab: Currents from Recursion}. Their notation is related to the fields of Section~\ref{sec: The Deformed Hierarchy} by
\begin{equation}
    Y_{s,s+4m+1}= M_{s,m}\pmod{\del},
    \qquad
    \J_{s,s+4m}=\sfQ{Y_{s,s+4m+1}}.
\end{equation}

\begin{equation}\label{tab: Currents from Recursion}
\begin{array}{c|l}
\text{Field} & \text{Exact expression} \\ \hline
Y_{1,6}  & J_6 \\[6pt]
Y_{1,10} & -\dfrac{546}{116875}\mathcal P_{10} \\[8pt]
Y_{1,14} & \dfrac{13104}{6828125}\mathcal P_{14} \\[8pt]
Y_{5,10} & -\dfrac{273}{116875}\mathcal P_{10} \\[8pt]
Y_{5,14} & \dfrac{3276}{6828125}\mathcal P_{14} \\[8pt]
Y_{5,18} & -\dfrac{10192}{543740484375}\mathcal P_{18} \\[8pt]
Y_{7,12} & \dfrac{28}{39365625}\mathcal P_{12} \\[8pt]
Y_{7,16} & -\dfrac{5422144}{2657578125}\mathcal P_{16} \\[8pt]
Y_{7,20} & \dfrac{81536}{60743955078125}\mathcal P_{20}
\end{array}
\end{equation}
The fields $\mathcal P_h$ appearing in these expressions are given below, together with the weight-$6$ field. These expressions were obtained with the assistance of ChatGPT; the first two cases were independently checked directly, while the rest were checked by inspecting and testing the code produced. The cylinder charges and their low-level eigenvalues were computed using the method of Appendix B of~\cite{Downing:2024nfb}. The results are tested against the TBA in the main body of this paper.

\medskip
\noindent\textbf{Weight 6.}
\begin{equation}
\begin{aligned}
\mathcal P_{6}&=-2\mathcal B_{6} + 7\mathcal B_{4,2}.
\end{aligned}
\end{equation}
\medskip
\noindent\textbf{Weight 10.}
\begin{equation}
\begin{aligned}
\mathcal P_{10}&=-756\mathcal B_{10} + 440\mathcal B_{8,2}\\
&\quad +165\mathcal B_{7,3} + 660\mathcal B_{6,4}.
\end{aligned}
\end{equation}
\medskip
\noindent\textbf{Weight 12.}
\begin{equation}
\begin{aligned}
\mathcal P_{12}&=-19845952\mathcal B_{12} + 4592276\mathcal B_{10,2}\\
&\quad +7996495\mathcal B_{9,3} + 3140826\mathcal B_{8,4}\\
&\quad +10136269\mathcal B_{7,5} + 2351440\mathcal B_{6,4,2}.
\end{aligned}
\end{equation}
\medskip
\noindent\textbf{Weight 14.}
\begin{equation}
\begin{aligned}
\mathcal P_{14}&=-2658804\mathcal B_{14} + 483885\mathcal B_{12,2}\\
&\quad +548334\mathcal B_{11,3} + 777213\mathcal B_{10,4}\\
&\quad +246039\mathcal B_{9,5} + 1040537\mathcal B_{8,6}\\
&\quad +30590\mathcal B_{8,4,2} + 137655\mathcal B_{7,5,2}.
\end{aligned}
\end{equation}
\medskip
\noindent\textbf{Weight 16.}
\begin{equation}
\begin{aligned}
\mathcal P_{16}&=-8673291\mathcal B_{16} + 919564\mathcal B_{14,2}\\
&\quad +1630555\mathcal B_{13,3} + 1239487\mathcal B_{12,4}\\
&\quad +1958434\mathcal B_{11,5} + 474946\mathcal B_{10,6}\\
&\quad +228735\mathcal B_{10,4,2} + 2612203\mathcal B_{9,7}\\
&\quad +64515\mathcal B_{9,5,2} + 322575\mathcal B_{8,6,2}.
\end{aligned}
\end{equation}
\medskip
\noindent\textbf{Weight 18.}
\begin{equation}
\begin{aligned}
\mathcal P_{18}&=-367359790991556\mathcal B_{18} + 25827525172432\mathcal B_{16,2}\\
&\quad +48495860630182\mathcal B_{15,3} + 54286890328090\mathcal B_{14,4}\\
&\quad +38435988850837\mathcal B_{13,5} + 65820212896372\mathcal B_{12,6}\\
&\quad +3401688198240\mathcal B_{12,4,2} + 14983388827297\mathcal B_{11,7}\\
&\quad +6908557697055\mathcal B_{11,5,2} + 87804128523282\mathcal B_{10,8}\\
&\quad +2633747844780\mathcal B_{10,6,2} + 915342605685\mathcal B_{10,5,3}\\
&\quad +9412798400280\mathcal B_{9,7,2} + 538025806395\mathcal B_{9,6,3}\\
&\quad -768608294850\mathcal B_{8,6,4}.
\end{aligned}
\end{equation}
\medskip
\noindent\textbf{Weight 20.}
\begin{equation}
\begin{aligned}
\mathcal P_{20}&=-130361354017883868\mathcal B_{20} + 5331603449881520\mathcal B_{18,2}\\
&\quad +14085717813303045\mathcal B_{17,3} + 14154260224180530\mathcal B_{16,4}\\
&\quad +15073436263187375\mathcal B_{15,5} + 10093296304465480\mathcal B_{14,6}\\
&\quad +1211222775472500\mathcal B_{14,4,2} + 19342450858824570\mathcal B_{13,7}\\
&\quad +1115643946810500\mathcal B_{13,5,2} + 3656926491717620\mathcal B_{12,8}\\
&\quad +2140792284414500\mathcal B_{12,6,2} + 348661539735000\mathcal B_{12,5,3}\\
&\quad +25558019854881630\mathcal B_{11,9} + 561523016422250\mathcal B_{11,7,2}\\
&\quad +125939385557000\mathcal B_{11,6,3} + 2528671673796500\mathcal B_{10,8,2}\\
&\quad +132060511199000\mathcal B_{10,7,3} - 52575623428000\mathcal B_{10,6,4}\\
&\quad -254986034259000\mathcal B_{9,7,4} - 34179894595000\mathcal B_{8,6,4,2}.
\end{aligned}
\end{equation}

The cylinder zero-modes of these fields act on the $h=-1/5$ module. Table~\ref{tab:lee-yang-low-level-charge-eigenvalues} gives their eigenvalues on
\begin{equation}
    \ket{h},\qquad L_{-1}\ket{h},\qquad
    L_{-2}\ket{h},\qquad L_{-3}\ket{h},
    \qquad h=-\frac15.
\end{equation}
These states are labelled by $\{0\},\{1\},\{2\},\{3\}$, respectively, as in the main text. The ground-state to excited-state ratios used in Section~\ref{sec: The Deformed Hierarchy in LY} follow directly from this table.

\begin{table}[h]
\centering
\renewcommand{\arraystretch}{1.9}
\resizebox{\textwidth}{!}{$
\begin{array}{c|rrrr}
\text{Charge} & \{0\} & \{1\} & \{2\} & \{3\} \\ \hline
\I_{1} & -\frac{1}{60} & \frac{59}{60} & \frac{119}{60} & \frac{179}{60} \\
\I_{3} & 0 & 0 & 0 & 0 \\
\I_{5} & \frac{89}{756000} & -\frac{59131}{756000} & -\frac{1630351}{756000} & -\frac{12273571}{756000} \\
\I_{7} & -\frac{211}{8100000} & -\frac{122371}{8100000} & -\frac{14154931}{8100000} & -\frac{238657891}{8100000} \\
\J_{1,5} & \frac{89}{756000} & -\frac{59131}{756000} & -\frac{1630351}{756000} & -\frac{12273571}{756000} \\
\J_{1,9} & -\frac{1729}{10800000} & \frac{516971}{10800000} & \frac{250011671}{10800000} & \frac{9462642371}{10800000} \\
\J_{1,13} & \frac{1077983}{324000000} & -\frac{26025277}{324000000} & -\frac{224811626137}{324000000} & -\frac{42908986708597}{324000000} \\
\J_{5,9} & -\frac{1729}{21600000} & \frac{516971}{21600000} & \frac{250011671}{21600000} & \frac{9462642371}{21600000} \\
\J_{5,13} & \frac{1077983}{1296000000} & -\frac{26025277}{1296000000} & -\frac{224811626137}{1296000000} & -\frac{42908986708597}{1296000000} \\
\J_{5,17} & -\frac{270993486853}{4082400000000} & -\frac{659028256633}{4082400000000} & \frac{24735006743930387}{4082400000000} & \frac{23766205518652130207}{4082400000000} \\
\J_{7,11} & \frac{66002197}{1042470000000} & \frac{8852004877}{1042470000000} & \frac{13653213561157}{1042470000000} & \frac{1160345004231037}{1042470000000} \\
\J_{7,15} & -\frac{97379007271}{68161500000000} & -\frac{988099776391}{68161500000000} & -\frac{15325673409921511}{68161500000000} & -\frac{6552596931921762631}{68161500000000} \\
\J_{7,19} & \frac{119823172992607}{601425000000000} & \frac{359563849268407}{601425000000000} & \frac{5073497131546288207}{601425000000000} & \frac{10924135816311700332007}{601425000000000} \\
\end{array}
$}\caption{Eigenvalues on $\ket{h}$, $L_{-1}\ket{h}$, $L_{-2}\ket{h}$, and $L_{-3}\ket{h}$, labelled by $\{0\},\{1\},\{2\},\{3\}$, respectively, with $h=-1/5$ and $c=-22/5$. The charges are the component piece of the deformed KdV charges. The cylinder charges are defined by $\sfQ{Y}=\int_0^{2\pi i}Y(w)\,dw/(2\pi i)$.}
\label{tab:lee-yang-low-level-charge-eigenvalues}
\end{table}

\section{TBA Expansion of Lee-Yang Charge Eigenvalues}\label{sec: Appendix TBA}

We collect the equations used to compute the small-$\kappa$ expansion of the TBA eigenvalues through cubic order. They follow the expansion method of~\cite{Downing:2024nfb}, with the coupling normalisation specified below. We consider the spin-$5$ deformation and the excited states described by a single pair of complex rapidities $\eta,\bar\eta$. Write
\begin{equation}
    \epsilon(\theta)=\sum_{m=0}^{3}\epsilon_m(\theta)\kappa^m
    +O(\kappa^4),
    \qquad
    \eta=\sum_{m=0}^{3}\eta_m\kappa^m+O(\kappa^4),
\end{equation}
and define
\begin{equation}
    L(x):=\log(1+e^{-x}).
\end{equation}
Primes on $L$ and $\varphi$ denote derivatives with respect to their arguments. In the integrands of the pseudoenergy and rapidity equations below, we suppress the argument $\theta'$ of $\epsilon_m$; thus $L^{(r)}(\epsilon_0)$ means $L^{(r)}(\epsilon_0(\theta'))$. The functions $S$ and $\varphi$ are those defined in Section~\ref{sec: The Deformed Hierarchy in LY}.

Expanding the pseudoenergy equation gives
\begin{equation}\label{eq:epsilon0excited}
\begin{aligned}
    \epsilon_0(\theta)
    ={}&e^\theta
    +\log\!\left(\frac{S(\theta-\eta_0)}
    {S(\theta-\bar\eta_0)}\right)-\int_\R\varphi(\theta-\theta')
    L(\epsilon_0)\frac{d\theta'}{2\pi},
\end{aligned}
\end{equation}
\begin{equation}\label{eq: first order epsilon}
\begin{aligned}
    \epsilon_1(\theta)
    ={}&e^{5\theta}
    +2\Im\!\left(\eta_1\varphi(\theta-\eta_0)\right)\\
    &-\int_\R\varphi(\theta-\theta')
    \epsilon_1L'(\epsilon_0)\frac{d\theta'}{2\pi},
\end{aligned}
\end{equation}
\begin{equation}
\begin{aligned}
    \epsilon_2(\theta)
    ={}&2\Im\!\left(
    \eta_2\varphi(\theta-\eta_0)
    -\frac12\eta_1^2\varphi'(\theta-\eta_0)\right)-\int_\R\varphi(\theta-\theta')
    \left[\epsilon_2L'(\epsilon_0)
    +\frac12\epsilon_1^2L''(\epsilon_0)\right]
    \frac{d\theta'}{2\pi},
\end{aligned}
\end{equation}
and
\begin{equation}
\begin{aligned}
    \epsilon_3(\theta)
    ={}&2\Im\Bigl(
    \eta_3\varphi(\theta-\eta_0)
    -\eta_2\eta_1\varphi'(\theta-\eta_0)
    +\frac16\eta_1^3\varphi''(\theta-\eta_0)\Bigr)\\
    &-\int_\R\varphi(\theta-\theta')
    \Bigl[\epsilon_3L'(\epsilon_0)
    +\epsilon_2\epsilon_1L''(\epsilon_0)
    +\frac16\epsilon_1^3L'''(\epsilon_0)\Bigr]
    \frac{d\theta'}{2\pi}.
\end{aligned}
\end{equation}
The ground-state equations are obtained by omitting all source terms involving the coefficients $\eta_m$ of the pole and their conjugates. For an excited state with fixed $\ell\in\N$, the zeroth-order rapidity satisfies
\begin{equation}\label{eq:eta0constraint}
\begin{aligned}
    2\ell\pi i
    ={}&e^{\eta_0}-\log S\!\left(2i\Im(\eta_0)\right)-\int_\R\varphi(\eta_0-\theta')
    L(\epsilon_0)\frac{d\theta'}{2\pi}.
\end{aligned}
\end{equation}
The first-order equation is
\begin{equation}\label{eq: first order}
\begin{aligned}
    0={}&\eta_1e^{\eta_0}+e^{5\eta_0}
    +2\Im(\eta_1)\varphi\!\left(2i\Im(\eta_0)\right)\\
    &-\int_\R\Bigl[
    \eta_1\varphi'(\eta_0-\theta')L(\epsilon_0)
    +\varphi(\eta_0-\theta')\epsilon_1L'(\epsilon_0)
    \Bigr]\frac{d\theta'}{2\pi}.
\end{aligned}
\end{equation}
At second order,
\begin{equation}
\begin{aligned}
    0={}&\left(\eta_2+\frac12\eta_1^2\right)e^{\eta_0}
    +5\eta_1e^{5\eta_0}+2\Bigl[
    \Im(\eta_2)\varphi\!\left(2i\Im(\eta_0)\right)
    +i\Im(\eta_1)^2\varphi'\!\left(2i\Im(\eta_0)\right)
    \Bigr]\\
    &-\int_\R\Biggl\{
    \left[\eta_2\varphi'(\eta_0-\theta')
    +\frac12\eta_1^2\varphi''(\eta_0-\theta')\right]
    L(\epsilon_0)
    +\eta_1\varphi'(\eta_0-\theta')\epsilon_1L'(\epsilon_0)\\
    &\hspace{17mm}
    +\varphi(\eta_0-\theta')
    \left[\epsilon_2L'(\epsilon_0)
    +\frac12\epsilon_1^2L''(\epsilon_0)\right]
    \Biggr\}\frac{d\theta'}{2\pi}.
\end{aligned}
\end{equation}
At third order,
\begin{equation}
\begin{aligned}
    0={}&\left(\eta_3+\eta_2\eta_1+\frac16\eta_1^3\right)e^{\eta_0}
    +\left(5\eta_2+\frac{25}{2}\eta_1^2\right)e^{5\eta_0}\\
    &+2\Im(\eta_3)\varphi\!\left(2i\Im(\eta_0)\right)
    +4i\Im(\eta_2)\Im(\eta_1)
    \varphi'\!\left(2i\Im(\eta_0)\right)-\frac43\Im(\eta_1)^3
    \varphi''\!\left(2i\Im(\eta_0)\right)\\
    &-\int_\R\Biggl\{
    \left[\eta_3\varphi'(\eta_0-\theta')
    +\eta_2\eta_1\varphi''(\eta_0-\theta')
    +\frac16\eta_1^3\varphi'''(\eta_0-\theta')\right]
    L(\epsilon_0)\\
    &\hspace{17mm}
    +\left[\eta_2\varphi'(\eta_0-\theta')
    +\frac12\eta_1^2\varphi''(\eta_0-\theta')\right]
    \epsilon_1L'(\epsilon_0)\\
    &\hspace{17mm}
    +\eta_1\varphi'(\eta_0-\theta')
    \left[\epsilon_2L'(\epsilon_0)
    +\frac12\epsilon_1^2L''(\epsilon_0)\right]\\
    &\hspace{17mm}
    +\varphi(\eta_0-\theta')
    \Bigl[\epsilon_3L'(\epsilon_0)
    +\epsilon_1\epsilon_2L''(\epsilon_0)
    +\frac16\epsilon_1^3L'''(\epsilon_0)\Bigr]
    \Biggr\}\frac{d\theta'}{2\pi}.
\end{aligned}
\end{equation}
These equations are required only for the excited states. 

Our convention absorbs the constant multiplying the spin-$5$ driving term into $\kappa$. To compare with the expansion parameter $\lambda:=\alpha/L^4$ used in~\cite{Downing:2024nfb}, set
\begin{equation}
    \kappa=C\lambda,
\end{equation}
where $C$ is the constant defined there and $L$ in $\alpha/L^4$ denotes the length scale, not the function $L(x)$ introduced above. The driving term can then be written as
\begin{equation}
    e^\theta+\kappa e^{5\theta}
    =e^\theta+C\lambda e^{5\theta}.
\end{equation}
If $\widetilde\epsilon_m$ and $\widetilde\eta_m$ denote coefficients of powers of $\lambda$, the conversion is
\begin{equation}
    \widetilde\epsilon_m=C^m\epsilon_m,
    \qquad
    \widetilde\eta_m=C^m\eta_m.
\end{equation}
In particular, the first-order driving terms in~\eqref{eq: first order epsilon} and~\eqref{eq: first order} become $Ce^{5\theta}$ and $Ce^{5\eta_0}$. At higher orders, the terms obtained by expanding $C\lambda e^{5\eta}$ carry the same explicit factor $C$. The coefficient rescaling above accounts for the conversion at every order.

The TBA eigenvalue associated with a one-pole solution is
\begin{equation}
\begin{aligned}
    \mathcal I_{2n-1}^{(5),\mathrm{TBA}}
    =K_n\Bigg[&\frac{i}{2n-1}
    \left(e^{(2n-1)\eta}-e^{(2n-1)\bar\eta}\right)+\int_\R e^{(2n-1)\theta}
    L(\epsilon(\theta))\frac{d\theta}{2\pi}\Bigg],
    \qquad \kappa\geq0.
\end{aligned}
\end{equation}
We write its expansion as
\begin{equation}
    \mathcal I_s^{(5),\mathrm{TBA}}
    =\SCE_s+\kappa\SCE_{s,s+4}
    +\kappa^2\SCE_{s,s+8}
    +\kappa^3\SCE_{s,s+12}
    +O(\kappa^4).
\end{equation}
The state label is suppressed here; each choice of the TBA solution gives the corresponding coefficients. Substituting the expansions of $\epsilon$ and $\eta$ yields the following expressions. In these integrals, every $\epsilon_m$ is evaluated at $\theta$.

\begin{equation}
    \SCE_{2n-1}
    =K_n\left[
    -\frac{2}{2n-1}\Im\!\left(e^{(2n-1)\eta_0}\right)
    +\int_\R e^{(2n-1)\theta}
    L(\epsilon_0)\frac{d\theta}{2\pi}\right],
\end{equation}
\begin{equation}
    \SCE_{2n-1,\,2n+3}
    =K_n\left[
    -2\Im\!\left(e^{(2n-1)\eta_0}\eta_1\right)
    +\int_\R e^{(2n-1)\theta}
    \epsilon_1L'(\epsilon_0)\frac{d\theta}{2\pi}\right],
\end{equation}
\begin{equation}
\begin{aligned}
    \SCE_{2n-1,\,2n+7}
    =K_n\Bigg[&-2\Im\!\left(
    e^{(2n-1)\eta_0}
    \left[\eta_2+\frac{2n-1}{2}\eta_1^2\right]\right)\\
    &+\int_\R e^{(2n-1)\theta}
    \left[\epsilon_2L'(\epsilon_0)
    +\frac12\epsilon_1^2L''(\epsilon_0)\right]
    \frac{d\theta}{2\pi}\Bigg],
\end{aligned}
\end{equation}
and
\begin{equation}
\begin{aligned}
    \SCE_{2n-1,\,2n+11}
    =K_n\Bigg[&-2\Im\!\left(
    e^{(2n-1)\eta_0}
    \left[\eta_3+(2n-1)\eta_1\eta_2
    +\frac{(2n-1)^2}{6}\eta_1^3\right]\right)\\
    &+\int_\R e^{(2n-1)\theta}
    \Bigl[\epsilon_3L'(\epsilon_0)
    +\epsilon_1\epsilon_2L''(\epsilon_0)
    +\frac16\epsilon_1^3L'''(\epsilon_0)\Bigr]
    \frac{d\theta}{2\pi}\Bigg].
\end{aligned}
\end{equation}
The ground-state coefficients follow by omitting the any term containing $\eta_i$. Together with the solutions of the preceding equations, these formulae give the coefficients and ratios compared in Section~\ref{sec: The Deformed Hierarchy in LY}.

\newpage
\bibliography{bib}

\end{document}